\newcommand{\subscr}[2]{{#1}_{\textup{#2}}}
\newcommand{\until}[1]{\{1,\dots,#1\}}
\newcommand{\de}{\mathrm{d}}
\newcommand{\real}{\mathbb{R}}
\renewcommand{\real}{\mathbb{R}}
\newcommand{\I}{\mathrm{I}}
\newcommand{\diag}{\mathrm{diag}}
\newcommand{\volt}{\mathrm{v}}
\DeclareSymbolFont{bbold}{U}{bbold}{m}{n}
\DeclareSymbolFontAlphabet{\mathbbold}{bbold}
\newcommand{\vect}[1]{\mathbbold{#1}}
\newcommand{\vectorones}[1][]{\vect{1}_{#1}}
\newcommand{\vectorzeros}[1][]{\vect{0}_{#1}}
\definecolor{gnblue1}{RGB}{0,36,71}   
\definecolor{gnblue2}{RGB}{0,60,118}  
\definecolor{gnblue3}{RGB}{0,85,164}  
\definecolor{gnblue4}{RGB}{0,108,212} 
\definecolor{gnblue5}{RGB}{0,133,255}  
\definecolor{gnblue6}{RGB}{35,156,255} 
\definecolor{gnblue7}{RGB}{88,177,255} 
\definecolor{gnbrown1}{RGB}{71,27,0}  
\definecolor{gnbrown2}{RGB}{117,45,0} 
\definecolor{gnbrown3}{RGB}{164,62,0} 
\definecolor{gnbrown4}{RGB}{211,80,0} 
\definecolor{gnbrown5}{RGB}{255,97,0} 
\definecolor{gnbrown6}{RGB}{255,127,26} 
\definecolor{gnbrown7}{RGB}{255,155,86} 
\newtheorem{theorem}{Theorem}
\newtheorem{lemma}{Lemma}
\newtheorem{proposition}{Proposition}
\newtheorem{assumption}{Assumption}
\newtheorem{remark}{Remark}
\newcommand{\captionfonts}{\normalsize}
\long\def\@makecaption#1#2{%
  \vskip\abovecaptionskip
  \sbox\@tempboxa{{\captionfonts #1: #2}}%
  \ifdim \wd\@tempboxa >\hsize
    {\captionfonts #1: #2\par}
  \else
    \hbox to\hsize{\hfil\box\@tempboxa\hfil}%
  \fi
  \vskip\belowcaptionskip}
\title{Firing Rate Models as Associative Memory: \newline Excitatory-Inhibitory Balance for Robust Retrieval}
\author{ \href{https://orcid.org/0009-0000-3444-0838}{\includegraphics[scale=0.06]{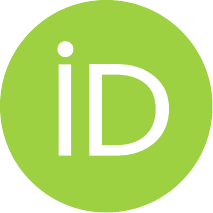}\hspace{1mm}Simone Betteti} \\
	Department of Information Engineering\\
	Università degli Studi di Padova\\
	Padova, 35131, IT \\
	\texttt{bettetisim[at]dei.unipd.it} \\
	\And
	\href{https://orcid.org/0000-0002-9439-296X}{\includegraphics[scale=0.06]{orcid.pdf}\hspace{1mm}Giacomo Baggio} \\
	Department of Information Engineering\\
	Università degli Studi di Padova\\
	Padova, 35131, IT \\
	\texttt{baggio[at]dei.unipd.it} \\
	\AND
    \href{https://orcid.org/0000-0002-4785-2118}{\includegraphics[scale=0.06]{orcid.pdf}\hspace{1mm}Francesco Bullo} \\
	Center for Control, Dynamical Systems and Computation\\
	University of California at Santa Barbara,\\
	Santa Barbara, CA, 93106 IT \\
	\texttt{bullo[at]ucsb.edu}
    \And
    \href{https://orcid.org/0000-0001-8926-1888}{\includegraphics[scale=0.06]{orcid.pdf}\hspace{1mm}Sandro Zampieri} \\
	Department of Information Engineering\\
	Università degli Studi di Padova\\
	Padova, 35131, IT \\
	\texttt{zampi[at]dei.unipd.it}\\ 
}
\begin{document}
\maketitle

\begin{abstract}
Firing rate models are dynamical systems widely used in applied and theoretical neuroscience to describe local cortical dynamics in neuronal populations. By providing a macroscopic perspective of neuronal activity, these models are essential for investigating oscillatory phenomena, chaotic behavior, and associative memory processes. Despite their widespread use, the application of firing rate models to associative memory networks has received limited mathematical exploration, and most existing studies are focused on specific models.  Conversely, well-established associative memory designs, such as Hopfield networks, lack key biologically-relevant features intrinsic to firing rate models, including positivity and interpretable synaptic matrices that reflect excitatory and inhibitory interactions. To address this gap, we propose a general framework that ensures the emergence of re-scaled memory patterns as stable equilibria in the firing rate dynamics. Furthermore, we analyze the conditions under which the memories are locally and globally asymptotically stable, providing insights into constructing biologically-plausible and robust systems for associative memory retrieval.
\end{abstract}

\keywords{Associative Memory \and Recurrent Neural Networks \and Rate Models}

\section{Introduction}
The modelling of associative memory processes began in the early 1970's and
1980's with the mathematical formalization of Amari \citep{A:72,A:77} and
of Grossberg \citep{G:83} and the elegant and explicit construction of
Hopfield \citep{H:82, H:84}. The authors drew inspiration from the early
successes of statistical physics in the description of glassy phenomena
\citep{SK:75, MP:87}, leveraging the average properties of simple
interconnected units. The key idea was to define a network of neuron-like
computational units, similar to those studied by McCulloch and Pitts
\citep{MW:43}, and investigate memory retrieval as emergent
processes. Within this context, the authors conceptualized associative
memory networks as dynamical systems defined by ordinary differential
equations (ODEs) having as stable equilibrium points the memory patterns to
retrieve. The first key contribution was the definition of a Lyapunov
function for the associative memory system that ensured global asymptotic
convergence to the equilibria of the system. Thus, any initial condition
for the system would lie in the basin of attraction of one of these
equilibria, and the system trajectory will inevitably evolve towards
it. The second key contribution was the explicit definition of the set of
memory vectors as binary patterns taking values in $\{-1,+1\}$, analogously
to ferromagnets in spin glasses. The binary representation of the memory
vectors allowed for the explicit design of a synaptic matrix, that ensured
the system's equilibria precisely matched the intended memories. The
effective combination of the two key contributions has catalyzed a wealth
of subsequent research, both analytical and numerical, focusing on the
fundamental properties of associative memory systems. Notably, many authors
have studied the storage capacity \citep{AHM:87, AHMb:87, MPRV:87,P:95} of
associative memory systems, that is the maximum number of memories that can
be stored in the synaptic matrix without compromising their stability.
Subsequent works \citep{TF:88, T:90, AT:91} have
extended beyond the binary spin structure proposed by Hopfield, enabling
binary positive activations $\{0,1\}$ and low levels of neural
activity. These works retain the dynamic framework initially proposed by
Grossberg and Hopfield, hereafter referred to as \emph{voltage
equations}, but apply non-negative activation functions to yield neuron
firing rates.  While positive activations allow interpreting synaptic
matrix components as excitatory or inhibitory, the relationship between
voltages and firing rates remains arbitrary and highly dependent on
network parameters.  This limitation has led associative memory
researchers to explore models that directly connect empirically
measurable quantities, like firing rates, with one another. 

Parallel to the advancements in the modeling of
associative memory networks, detailed biophysical models of cortical
circuits have received an increasing amount of attention due
to their capability of generating synthetic data of EEG recordings, thus
providing insight to experimentalists on the roots of the measured
quantities. Beginning with the groundbreaking description of the neuron
biochemical response by Hodgkin and Huxley \citep{HH:52}, the
characterization of neuronal properties by means of dynamical systems has
become ever more pervasive. In the beginning, the mathematical
characterization of neurons and neural processes focused on the extensive
treatment of the microscopic properties, such as gating and diffusion of
ions, and branched into the well known FitzHugh-Nagumo \citep{FH:61} and
Morris-Lecar \citep{ML:81} models. However, the mathematical complexity of
these models, combined with the computational limitations, constrained
researchers to small-scale studies involving only a handful of neurons. To
address these challenges, substantial efforts were devoted to the
derivation of simplified mathematical models amenable to analytical
treatment and large scale simulation. This effort led to the establishment
of the class of models widely known as Integrate-and-Fire \citep{BC:00,
  B:00, B:06, BAN:06}. Despite their utility, using Integrate-and-Fire
models for memory retrieval remains challenging due to the hybrid nature of
their dynamics, switching between continuous dynamics and a hard reset to a
given initial condition. The key idea for the formulation of biologically
plausible associative memory systems was to consider an Integrate-and-Fire
model and to average neural spikes over fixed time windows to derive a rate
of firing \citep{ET:10, WG:14} for the neuron. The core
of the newly proposed biologically plausible model, referred to as the
\emph{firing rate model}, lies in its use of a non-negative activation
function that directly processes firing rates rather than membrane
voltages. This approach effectively links two empirical observables
across clusters of neurons. Despite its potential importance for
neuroscience, designing firing rate systems—such as synaptic matrices and
activation functions—so that specific memories appear as locally stable
equilibria remains an underexplored area \citep[Section~7.4]{AD:05}. 

The primary contributions of this paper are (i) the
design of a synaptic matrix that encodes memories as equilibrium points
within the firing rate system, applicable to arbitrary activation
functions, and (ii) an analysis of both local and global stability,
building upon and expanding the foundational work presented in
\citep[Section~7.4]{AD:05}. Specifically, we present a method to design
a synaptic matrix of the \emph{firing rate} model that guarantees the
retrieval of a rescaled version of the prototypical memories. These
prototypical memories are assumed to be equally sparse and equally
correlated binary vectors, meaning that they share a common number of ones
and of overlapping entries. In particular, our first theorem states the
necessary and sufficient conditions that ensure the existence of the
rescaled prototypical memories as equilibrium points for the \emph{firing
rate} dynamics. The proposed construction admits a biological
interpretation of the synaptic components in terms of excitation,
inhibition, and homeostatic regulation. Moreover, we frame the canonical
prescription of Dayan~\&~Abbott \citep[Section~7.4]{AD:05} as a special
case of our synaptic matrix construction. Additionally, we show that the
emergence of “anti-memories” is possible only for pathological cases
reducible to the use of Hopfield-type synaptic matrix, and explore the
existence of spurious equilibria, particularly homogeneous ones. The second
theorem goes on to establish sufficient conditions for the local asymptotic
stability of the rescaled prototypical memories and, leveraging results
from Grossberg \citep{G:83} and Hopfield \citep{H:84}, we proceed by
defining an energy function to analyze the global behavior of
trajectories. Finally, we investigate numerically the tightness of these
stability conditions and visualize the energy landscapes for two relevant
examples. Notably, simulations reveal how the choice of a negative
homeostatic strength results in wider stability regions over the space of
parameters, compatibly with the canonical sign choice for the homeostatic
term found in the literature. To enhance readability, the proofs of the
technical results are deferred to the Appendix.

Notation: We let $\real^{n\times m}$ denote the set of $n\times m$ matrices with real entries. The symbol $\vectorones[n]$ indicates an $n$-dimensional vectors of ones and $I_n$ the $n\times n$ identity matrix. Given a matrix $A\in\real^{n\times m}$, $A^\top$ is the transpose of $A$. For a symmetric matrix $A=A^\top$, we write $A\succ 0$ ($A\succeq 0$) if $A$ is positive definite (positive semidefinite, respectively). Moreover for two symmetric matrices $A,B$, we write $A\succ B$ ($A\succeq B$) if $A-B\succ 0$ ($A-B\succeq 0$). Given a vector $x\in\real^n$, $\diag(x)$ is the diagonal matrix with the entries of $x$ as diagonal entries. 
If $f(x)$ is a real-valued function, $f'(x)$ denotes the derivative of $f$. A function is weakly increasing if $f(x_{1})\le f(x_2)$ for all $x_{1}, x_2\in\real$ with $x_{1}<x_2$ and strictly increasing if $f(x_{1})< f(x_2)$ for all $x_{1}, x_2\in\real$ with $x_{1}<x_2$. 

\section{Voltage vs.\ firing rate models}
The \emph{voltage} and \emph{firing rate} models are two
widely used neural network models for associative memory. In its classic
autonomous version, the continuous-time \emph{voltage} model is
\begin{align}\label{eq:H}\tag{V}
    \dot \volt(t) = -\volt(t) + W\Psi(\volt(t)),
\end{align}
where $\volt(t)$ is an $n$-dimensional, time-dependent vector containing
voltages of neuronal populations, $\dot \volt(t)$ is its time derivative,
$W\in\real^{n\times n}$ is the synaptic matrix, and $\Psi\colon \real^{n}
\to \real^{n}$ denotes the activation function.  The activation function
$\Psi(\cdot)$ is typically assumed to be diagonal and homogeneous, meaning
that $\Psi(\volt)_i = \psi(\volt_i)$, where $\psi\colon \real \to \real$ is
a scalar function applied entrywise to $\volt \in \real^n$.  Moreover,
$\psi(\cdot)$ is typically assumed to be weakly increasing and odd.  The
prototypical memories of interest are binary vectors encoded in the
synaptic matrix $W$, by means of one-shot Hebbian learning \citep{H:82,
  GK:02}. Scaled version of these memories are subsequently retrieved as
the stable equilibria of equation \eqref{eq:H}.
  
If, in addition to the previous assumptions, $W$ is symmetric and the
activation function $\psi(\cdot)$ is differentiable, bounded and strictly
increasing, then the trajectories of \eqref{eq:H} are guaranteed to
converge to the set of its equilibria. The classic proof of this result
resorts to the LaSalle invariance principle \citep{K:02} and to the
associated energy function \citep{H:84}:
\begin{align}\label{eq:E}
  \subscr{E}{H}(\volt) =& -\frac{1}{2} \Psi(\volt)^\top W \Psi(\volt) +
  \sum_{i=1}^n \int_0^{\psi(\volt_i)} \psi^{-1}(z)\, \de z.
\end{align}
Indeed, it can be shown that along the system’s trajectories, the time
derivative of the energy function is
\begin{equation}\label{eq:DE}
    \subscr{\dot{E}}{H}(\volt)=-\dot{\volt}^{\top}\diag(\Psi'(\volt))\dot{\volt},
\end{equation}
which is strictly negative for all $\volt$ such that $\dot{\volt}\neq 0$.

In contrast, the autonomous continuous-time \emph{firing rate} model reads
as
\begin{align}\label{eq:FR}\tag{FR}
    \dot x(t) = -x(t) + \Phi(W x(t)), 
\end{align}
where the $n$-dimensional state vector $x$ contains the firing rates of the
neuronal populations, and $\dot x(t)$ denotes its time derivative.  The
activation function $\Phi(\cdot)$ is typically assumed to be diagonal and
homogeneous, meaning $\Phi(x)_i = \phi(x_i)$, where $\phi\colon \real \to
\real$ is a scalar, weakly increasing function.  In \emph{firing rate}
systems, the activation function is assumed to be non-negative, meaning
$\phi(x)\ge 0$ for all $x\in\real$. The non-negativity of $\phi(\cdot)$
implies that \eqref{eq:FR} is a \emph{positive system}, meaning that the
trajectories of \eqref{eq:FR} do not leave the positive orthant for
non-negative initial conditions: if each entry of the initial condition
satisfies $x_i(0)\geq0$, then each entry remains non-negative for all time.

Finally, the relationship between the \emph{voltage} and \emph{firing rate}
models has been investigated in prior work \citep{MF:12, F:18}. These
studies demonstrate that the two models\footnote{In these works, the
\emph{voltage} model is presented with dynamical equations $\dot v = -v +
W\Phi(v)$, while the \emph{firing rate} model is expressed as $\dot r = -r
+ \Phi(Wr)$. Both models are grouped under the umbrella of firing rate
systems, assuming they share a common positive activation function $\Phi()$
and the same synaptic matrix $W$.} can be made equivalent through
appropriate transformations of state and input. However, they do not
address the challenge of designing the synaptic matrix $W$ to achieve
desired memory patterns as stable equilibria within the system. Moreover,
when $W$ is low-rank, as is typical in associative memory contexts, the
relationship between the models becomes less direct and more
nuanced. Specifically, the equivalence is derived by projecting the
\emph{firing rate} dynamics onto the subspace spanned by the columns of
$W$, a constraint absent in the original dynamics. In addition, bridging
the dynamics of the two models introduces a time-varying external input,
complicating the use of standard Lyapunov methods to establish convergence
to equilibria.

\section{Equilibria assignment through synaptic weights}\label{sec:equilibria}

\subsection{Design techniques}

Consider the \emph{firing rate} model \eqref{eq:FR}. From now on we will assume that the activation function satisfies the following hypothesis.

\begin{assumption}[Positive and monotonic activation functions]\label{assum:activation-function}
The activation function $\Phi(\cdot)$ in \eqref{eq:FR} is diagonal and
homogeneous, i.e., there exists $\phi\colon\real\to\real$ such that
$\Phi(x)_i= \phi(x_i)$ for all $x\in\real^n$. Moreover, the function
$\phi(\cdot)$ is continuous, non-negative, and weakly increasing.
\end{assumption}

Given a set of $\{0,1\}$-valued vectors in $\real^n$
\begin{equation}\label{eq:PM}
  \{\xi^{\mu}\}_{\mu=1}^P 
\end{equation}
corresponding to \emph{prototypical memory} patterns with either active or inactive units, we are interested in designing biologically plausible synaptic matrices $W$ such that an appropriately scaled version of these vectors are equilibria of the \emph{firing rate} dynamics \eqref{eq:FR}. 
We consider prototypical memories $\{\xi^{\mu}\}_{\mu=1}^P$ satisfying the following hypothesis.

\begin{assumption}[Equally sparse and correlated memories]\label{assum:memories}
  Given an \emph{average activity parameter} $p\in (0,1)$, the prototypical
  memories $\{\xi^{\mu}\}_{\mu=1}^P$, $\xi^{\mu}\in\{0,1\}^n$ satisfy
  \begin{subequations}\label{eq:memories}
    \begin{align}
      &\text{(equal sparsity):}\qquad
      \vectorones[n]^\top \xi^{\mu}=p n , \ \ \ &\forall\, \mu\in\until{P}, \label{eq:memories-b}\\
      &\text{(equal correlation):}\qquad {\xi^{\mu}}^\top \xi^{\nu} = p^{2} n, \ \ \
      &\forall\, \label{eq:memories-c} \mu\ne \nu.
    \end{align}
  \end{subequations}
\end{assumption}

The conditions in Assumption~\ref{assum:memories} are inspired by the
choice of prototypical memories in the classic Hopfield
\citep{HW:87,XSZ:13} and \emph{firing rate} models \citep[Section~7.4]{AD:05}. In
probabilistic terms, the parameter $p$ is related to the average activity
of the network in each memory pattern assuming that these activities are
uncorrelated. The equal sparsity \eqref{eq:memories-b} and equal
correlation \eqref{eq:memories-c} constraints refer to the simplest
statistical structure that memory patterns can have. As a matter of fact, if we assume that the entries of prototypical memories $\xi^{\mu}_h\in\{0,1\}$, $\mu=1,\ldots,P$, $h=1,\ldots,n$ are i.i.d. random binary variables such that the probability that $\xi^{\mu}_h=1$ is equal to $p$ for all $h,\mu$,
then the conditions \eqref{eq:memories-b} and \eqref{eq:memories-c} are satisfied in expectation.  However, as described in the first section of the Appendix, our analysis applies to a more general scenario where the memory patterns have more general correlations.
        
We now provide two useful definitions.

\paragraph{From prototypical memories to synaptic matrices.}
To a set of prototypical memories $\{\xi^{\mu}\}_{\mu=1}^{P}$ satisfying
the sparsity and correlation Assumption~\ref{assum:memories} with average
activity $p$, we associate an $n\times{n}$ dimensional
\emph{covariance-based synaptic matrix}
\begin{equation}\label{eq:W-map}
  W ~:=~ \frac{\alpha}{p(1-p)n}\sum_{\mu=1}^{P}(\xi^{\mu} - p
  \vectorones[n])(\xi^{\mu} - p \vectorones[n])^\top ~+~ \frac{\gamma}{
    n}\vectorones[n]\vectorones[n]^\top,
\end{equation}
defined by the following parameters:
\begin{itemize}
\item the \emph{correlation strength} $\alpha\in\real$ which modulates the
  covariance part of $W$ given by the outer products of the
  shifted prototypical memories, and
\item the \emph{homeostatic strength} $\gamma\in\real$ which controls the
  magnitude of the component of the covariance-based synaptic matrix that modulates the
  synaptic response proportionally to the global network activity. This
  contribution to the global activity is referred to as homeostatic from
  the electrochemical balancing mechanism observed in real neuronal
  networks. Typically, homeostatic terms are discussed in the literature
  \citep{GRS:90, AD:05} as global inhibitory terms, with homeostatic
  strength $\gamma\le 0$.
\end{itemize}

\paragraph{From prototypical memories and selected firing rates to retrievable
  memories.}

Select low and high firing rates $x_{0}<x_{1} \in\real$ both belonging to the range of the activation function $\phi$,
and associate to each prototypical memory $\xi^\mu \in\real^n$ a \emph{retrievable memory}
$\bar\xi^\mu \in\real^n$ defined by
\begin{equation}\label{eq:RM-components}
  \bar\xi^{\mu}_i =
  \begin{cases}
    x_{0}, \qquad &\text{if }  \xi^{\mu}_{i} =0, \\
    x_{1}, \qquad &\text{if }  \xi^{\mu}_{i} =1.
  \end{cases}
\end{equation}
In vector format, $\bar\xi^\mu ~:=~ (x_{1}-x_{0})\xi^{\mu}+x_{0}\vectorones[n]$.

The firing rates $x_{0}$ and $x_{1}$ are the neural manifestation of the activity of population of neurons experiencing different input currents \citep{VKA:05}. The input currents are the incoming activity that each neuron processes and filters by means of its activation function. Therefore, we can mathematically relate input currents and firing rates in the following way. Let $\I_{0}\in\real$ represent a weak input current, and $\I_{1}\in\real$ a strong input current, so that $\I_{0}<\I_{1}$. Then the states of the neurons associated with low and high firing rates are $x_{0}=\phi(\I_{0})$ and $x_{1}=\phi(\I_{1})$, respectively.
       
Given the definitions of covariance-based synaptic matrix and retrievable memories, the
following theorem characterizes when the retrievable memories are equilibria
of the \emph{firing rate} system \eqref{eq:FR}. We postpone the proof to the
Appendix.

\begin{theorem}[Equilibria assignment through covariance-based synaptic weights]
  \label{thm:W}
  Consider the \emph{firing rate} model~\eqref{eq:FR} with activation functions
  satisfying the positivity and monotonicity
  Assumption~\ref{assum:activation-function}.  Consider prototypical
  memories $\{\xi^{\mu}\}_{\mu=1}^P$ satisfying the sparsity and
  correlation Assumption~\ref{assum:memories} with average activity $p$.

  If there exist currents $\I_{0}<\I_{1}$, with corresponding
  low and high firing rates $x_{0}:=\phi(\I_{0})$,
  $x_{1}:=\phi(\I_{1})$, and
  correlation and homeostatic strengths $\alpha$ and $\gamma$ satisfying
  \begin{subequations}\label{eq:parameters}
    \begin{align}
      &\alpha = \frac{\I_{1}-\I_{0}}{x_{1}-x_{0}},\label{eq:alpha}\\
      &\gamma = \frac{p \I_{1}+(1-p)\I_{0}}{p x_{1}+(1-p)x_{0}}, \label{eq:gamma}
    \end{align}
  \end{subequations}
  then each retrievable memory $\bar\xi^\mu$ is an equilibrium point of the
  \emph{firing rate} model \eqref{eq:FR} with 
  the covariance-based synaptic matrix~\eqref{eq:W-map}.
\end{theorem}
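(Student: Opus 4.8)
The plan is to verify the equilibrium condition directly. Since $\Phi(\cdot)$ acts entrywise through $\phi$, the point $\bar\xi^\mu$ is an equilibrium of \eqref{eq:FR} exactly when $\bar\xi^\mu = \Phi(W\bar\xi^\mu)$, i.e. $\bar\xi^\mu_i = \phi\big((W\bar\xi^\mu)_i\big)$ for every $i$. Because $\phi$ is only assumed weakly increasing (hence possibly non-injective), I would not attempt to invert $\phi$; instead I aim to establish the stronger, cleaner statement that the argument fed to $\phi$ lands exactly on the prescribed currents,
\begin{equation*}
(W\bar\xi^\mu)_i = \begin{cases}\I_0, & \xi^\mu_i = 0,\\ \I_1, & \xi^\mu_i = 1.\end{cases}
\end{equation*}
Once this holds, applying $\phi$ and using $x_0 = \phi(\I_0)$, $x_1 = \phi(\I_1)$ reproduces $\bar\xi^\mu_i$ by the definition \eqref{eq:RM-components}, closing the argument irrespective of monotonicity subtleties.

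To compute $W\bar\xi^\mu$, I would use the decomposition $\bar\xi^\mu = (x_1 - x_0)\xi^\mu + x_0\vectorones[n]$ and linearity, reducing everything to the two products $W\xi^\mu$ and $W\vectorones[n]$. The decisive simplification comes from Assumption~\ref{assum:memories}: in the covariance sum, the inner product $(\xi^\nu - p\vectorones[n])^\top\xi^\mu$ equals ${\xi^\nu}^\top\xi^\mu - p\,\vectorones[n]^\top\xi^\mu$, which by equal correlation \eqref{eq:memories-c} and equal sparsity \eqref{eq:memories-b} vanishes for every $\nu\neq\mu$ and equals $p(1-p)n$ for $\nu=\mu$. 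Thus only the diagonal term survives, giving $W\xi^\mu = \alpha(\xi^\mu - p\vectorones[n]) + \gamma p\,\vectorones[n]$. Likewise, centering forces $(\xi^\nu - p\vectorones[n])^\top\vectorones[n] = 0$, so the covariance block annihilates $\vectorones[n]$ and $W\vectorones[n] = \gamma\vectorones[n]$.

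Assembling these, I obtain $W\bar\xi^\mu = (x_1-x_0)\alpha\,\xi^\mu + C\,\vectorones[n]$ with scalar offset $C = (x_1-x_0)p(\gamma-\alpha) + x_0\gamma$. Entrywise this yields $(W\bar\xi^\mu)_i = C$ on the inactive units and $(W\bar\xi^\mu)_i = (x_1-x_0)\alpha + C$ on the active units. Substituting $\alpha = (\I_1-\I_0)/(x_1-x_0)$ from \eqref{eq:alpha} makes the active-to-inactive gap exactly $\I_1 - \I_0$, so it only remains to check $C = \I_0$. Rewriting $C = \gamma\big(px_1 + (1-p)x_0\big) - p(\I_1-\I_0)$ and inserting the value of $\gamma$ from \eqref{eq:gamma}, the bracket cancels the denominator of $\gamma$, leaving $C = p\I_1 + (1-p)\I_0 - p\I_1 + p\I_0 = \I_0$, as required.

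The verification is thus a short linear computation; the only real content — the step I would flag as the crux rather than an obstacle — is recognizing that equal sparsity and equal correlation make each centered memory $\xi^\mu - p\vectorones[n]$ an eigenvector of the covariance block (with eigenvalue $\alpha$) that is orthogonal to $\vectorones[n]$, so the sum over $\mu$ decouples, and that the two scalar conditions \eqref{eq:alpha}--\eqref{eq:gamma} are precisely the two linear equations needed to pin the active and inactive input currents to $\I_1$ and $\I_0$.
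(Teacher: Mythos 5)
Your proof is correct and follows essentially the same route as the paper's: both reduce the problem to verifying the key identity $W\bar\xi^\mu = (\I_1-\I_0)\xi^\mu + \I_0\vectorones[n]$ and then apply $\phi$ entrywise using $x_0=\phi(\I_0)$, $x_1=\phi(\I_1)$. The only differences are that the paper states this identity for a more general setting (correlation ${\xi^\mu}^\top\xi^\nu = prn$ with an extra centering parameter $\beta$, recovering Theorem~\ref{thm:W} at $r=p$) and leaves the computation as ``lengthy but straightforward,'' whereas you carry it out explicitly via the eigenvector structure of the centered memories and the relation $W\vectorones[n]=\gamma\vectorones[n]$.
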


Some comments on the parameters $\alpha$ and $\gamma$ are in order. First, in \eqref{eq:alpha} the parameter $\alpha$ is positive and is inversely proportional to the slope of the straight line intersecting the activation function $\phi(\cdot)$ at the coordinates $\I_{0}$ and $\I_{1}$.  
Geometrically, the inverse of the parameter $\gamma$ is the slope of the line that passes through the origin and intersects the straight line $x=\alpha^{-1}(\I-\I_{0})+x_{0}$ in the coordinate $\I_{p}=p\I_{1}+(1-p)\I_{0}$ (see Figure \ref{fig:gamma}). 
\begin{figure}[h!]
    \centering 
    \includegraphics[width=.55\columnwidth]{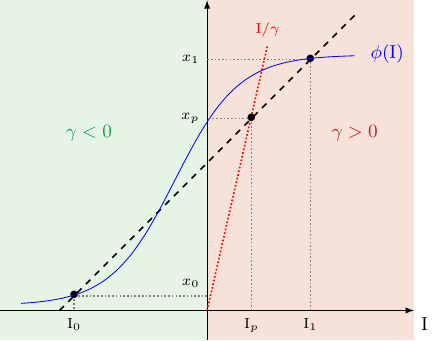}
    \caption{Graphical interpretation of the parameter $\gamma$ in \eqref{eq:gamma}: $\gamma^{-1}$ coincides with the slope of the straight line intersecting the origin and the point $(I_{p},x_{p})$, where $\I_{p}:=p \I_{1}+(1-p)\I_{0}$ and $x_{p}:=p x_{1}+(1-p)x_{0}$. Note that, if the point $(\I_{p},x_{p})$ belongs to the green area, then $\gamma$ is negative, whereas if the point $(\I_{p},x_{p})$ belongs to the red area, then $\gamma$ is positive.}
    \label{fig:gamma}
\end{figure}

\subsection{The canonical case of Dayan \& Abbott}
The matrix in \eqref{eq:W-map} closely resembles the one in \citep[Section~7.4]{AD:05}, which is the canonical choice throughout the theoretical neuroscience literature. We next show that the latter is indeed a special case of \eqref{eq:W-map}. 
The construction presented in \citep{AD:05}
assumes the following:
\begin{enumerate}
    \item[(i)] there exists $z^*$ such that $\phi(z)=0$ for $z\le z^*$;
    \item[(ii)] there are scalars $\lambda,\delta>0$ such that $\I_{0}=-\delta(1+p\lambda)\le z^*$, $\I_{1}=\delta(\lambda-1-p\lambda)$ and $x_{1}:=\phi(\I_{1})=\delta$. Since it is assumed that $\I_{0}\le z^*$, then $x_{0}:=\phi(\I_{0})=0$.
\end{enumerate}
From the previous relations it can be seen that $\I_{0}$, $\I_{1}$ cannot be chosen arbitrarily. Indeed, the above defined $\I_{0}$ and $\I_{1}$ have to satisfy the equation $p\I_{1}+(1-p)\I_{0}+\phi(\I_{1})=0$. This implies that for this type of construction we can start from any $\phi(\cdot)$ satisfying condition (i) and any real number $\I_{1}$ such that 
$$\frac{p\I_{1}+\phi(\I_{1})}{p-1}\le z^*$$ 
and take $\I_{0}:=\frac{p\I_{1}+\phi(\I_{1})}{p-1}$.
It can be seen that the parameters $\alpha,\gamma$ in this case become
\begin{align*}
    &\alpha=\frac{\I_{1}-\I_{0}}{x_{1}}=\lambda,\\
    &\gamma= -\frac{1}{p}
\end{align*}
and the covariance-based synaptic matrix then becomes
\begin{align}\label{eq: WAbb}
   W &= \frac{\lambda}{p(1-p)n}\sum_{\mu=1}^{P}(\xi^{\mu}-p\vectorones[n])(\xi^{\mu}-p\vectorones[n])^{\top}-\frac{1}{pn}\vectorones[n]\vectorones[n]^{\top}\nonumber
   \\
    & = \frac{\lambda}{p(1-p)\delta^{2}n}\sum_{\mu=1}^{P}(\bar\xi^{\mu}-p\delta\vectorones[n])(\bar\xi^{\mu}-p\delta\vectorones[n])^{\top}-\frac{1}{pn}\vectorones[n]\vectorones[n]^{\top}
\end{align}
where from the first to the second line in Equation (\ref{eq: WAbb}) we have multiplied and divided by $\delta^{2}$ the terms in the summation and used the fact that $\bar\xi^{\mu}=\delta\xi^{\mu}$. We have obtained in this way exactly the expression of $W$ given in  \citep{AD:05}.

\subsection{A bridge from math to biology}

We now want to address the biological interpretation of the map (\ref{eq:W-map}) and gain some insight on how the model can capture different aspects of neural processing. As it stands, the positivity of the \emph{firing rate} model offers a valuable interpretative tool to bridge dynamical system theory and neural processes. Specifically, it offers the possibility to understand the role of the different components of the covariance-based synaptic matrix and how they interact with the network activity to generate neuronal rates of firing. Separating the different components 
\begin{multline}
    W ~=~ \underbrace{\frac{\alpha}{p(1-p) n}\sum\nolimits_{\mu}^{P}\xi^{\mu}{\xi^{\mu}}^{\top}}_{\text{excitatory correlation network }W^{\text{ex}}} \\~-~\underbrace{\frac{\alpha}{(1-p) n}\left(\sum\nolimits_{\mu}^{P}\vectorones[n]{\xi^{\mu}}^{\top}+\xi^{\mu}\vectorones[n]^{\top}\right)}_{\text{inhibitory memory-network interaction }W^{\text{in}}}
     ~+~\underbrace{\left(\frac{\alpha}{(1-p) n}Pp+\frac{\gamma}{ n}\right)\vectorones[n]\vectorones[n]^{\top}}_{\text{global homeostatic network }W^{\text{om}}},
\end{multline}
we observe that 
\begin{itemize}
    \item $W^{\text{ex}}$ is a relatively sparse excitatory network that takes into account how specific neuronal sub-clusters positively excite to produce fixation of the activity on a given pattern. Indeed, from the outer product of the patterns, we have that $W^{\text{ex}}_{ij}\neq{0}$ only if there exists $\mu=1,\dots,P$ such that $\xi^{\mu}_{i}=1$ and $\xi^{\mu}_{j}=1$.
    \item $W^{\text{in}}$ is the inhibitory network and is the sum of two terms, both regulating the global network activity with weights proportional to the parameters $\alpha,p$. The first term, given by the outer products $\vectorones[n]{\xi^{\mu}}^{\top}$, provides global inhibition proportional to the correlation of the network activity and the different memory patterns. The second term, given by the outer products $\xi^{\mu}\vectorones[n]^{\top}$, provides selective inhibition to the units of the memory patterns proportional to the summed network activity.
    \item $W^{\text{om}}$ is the homeostatic network and it exercises global regulation of the \emph{firing rate} network via stimulation that depends on the normalized network rate
    \begin{equation}
        s(t) = \frac{\vectorones[n]^{\top}x(t)}{n}
    \end{equation}
\end{itemize}

Elaborating further on the biological constraints that lead to the construction of the covariance-based synaptic matrix, we observe that the $W$ in \eqref{eq:W-map} depends explicitly from the prototypical memory vectors $\{\xi^{\mu}\}_{\mu=1}^P$. Notably, the form of $W$ resembles the synaptic matrix of the classic Hopfield model \citep{H:82}, which can be constructed using the biologically plausible Hebbian learning rule \citep{GK:02}. Hebbian learning is a co-variation learning rule based on the principle articulated by D.~O.~Hebb ``cells that fire together, wire together'' \citep{H:49}, implying the use of local information to alter the strength of the synaptic couplings. In our case, we have that the covariance-based synaptic matrix is simply given by a one-shot Hebbian learning rule. Hebbian learning as a learning framework adheres to the biological plasticity processes known as long-term potentiation (LTP) and long-term depression (LTD) \citep{BGM:73, MP:04}, known to be responsible for learning and forgetting in mammals. These local synaptic modifications aim to stabilize the connection strengths between neurons, enabling the retrieval of certain patterns through the collective dynamics of the network.

\subsection{When the antimemories are equilibria}

Consider a set of prototypical memories \eqref{eq:PM} satisfying the equal sparsity and the equal correlation constraints of Assumption \ref{assum:memories} and let $W$ be as in Theorem \ref{thm:Wgen}. Assume that $\phi(\cdot)$ in \eqref{eq:FR} satisfies Assumption \ref{assum:activation-function}. For any prototypical memory $\xi$ we define the corresponding antimemory as $\xi^\text{ant}:=\vectorones[n]-\xi$ that is the vector in which we exchange the zeros with ones and vice versa. We wonder whether $\bar\xi^\text{ant}:=(x_{1}-x_{0})\xi^\text{ant}+x_{0}\vectorones[n]$ is an equilibrium point of \eqref{eq:FR}.
Notice that this is exactly what happens for the Hopfield models. The answer to this question is given in the Appendix where it is shown that anti-memories provide equilibrium points for \eqref{eq:FR} if and only if 
\begin{equation}\label{eq:condantim}
    p=1/2\quad \text{or}\quad \I_{0}x_{1}=\I_{1}x_{0}.
\end{equation}
Notice that, while for the Hopfield model we have that $\I_{0}x_{1}=\I_{1}x_{0}$ and so anti-memories are equilibrium points regardless the value of $p$, this is not true for \emph{firing rate} models since the condition $\I_{0}x_{1}=\I_{1}x_{0}$  generally fails. 

\subsection{Existence of homogeneous equilibria}\label{subsec:hom}

A well-known issue in associative memory networks with a Hebbian synaptic
matrix is the emergence of spurious equilibria \citep{AHM:85,H:18}, which
are equilibria that do not correspond to the intended memories.
We show now how the covariance-based synaptic matrix invariably produces at least one
spurious equilibrium point with equal entries. Such equilibria will be called \emph{homogeneous equilibria}.

\begin{lemma}[Homogeneous equilibria]\label{lm:hom}
  With the same notation and under the same assumptions as
  Theorem~\ref{thm:W},
\begin{enumerate}
\item there exists at least one solution $z$ to the equation
  \begin{equation}\label{eq:puntfiss}
    \gamma^{-1} z = \phi(z),
  \end{equation}
\item for each solution $z$, the vector $\bar x = \gamma^{-1} z
  \vectorones[n]$, is an equilibrium point for the \emph{firing rate} model \eqref{eq:FR}.
\end{enumerate}
\end{lemma}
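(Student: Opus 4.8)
The plan is to treat part~(ii) as an almost immediate consequence of a single structural fact about $W$, and to reserve the real work for the existence claim in part~(i), which I would establish by the intermediate value theorem after a sign analysis at the two currents $\I_0$ and $\I_1$.

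First I would show that $\vectorones[n]$ is an eigenvector of $W$ with eigenvalue $\gamma$. Applying $W$ from~\eqref{eq:W-map} to $\vectorones[n]$, the equal-sparsity condition~\eqref{eq:memories-b} gives $(\xi^\mu - p\vectorones[n])^\top\vectorones[n] = {\xi^\mu}^\top\vectorones[n] - pn = 0$, so every outer-product term of the covariance part annihilates $\vectorones[n]$, while the homeostatic term contributes $\frac{\gamma}{n}\vectorones[n](\vectorones[n]^\top\vectorones[n]) = \gamma\vectorones[n]$; hence $W\vectorones[n] = \gamma\vectorones[n]$. Given any solution $z$ of~\eqref{eq:puntfiss}, this yields $W\bar x = \gamma^{-1}z\,W\vectorones[n] = z\vectorones[n]$, so that, by homogeneity of $\Phi$ and~\eqref{eq:puntfiss}, $\Phi(W\bar x) = \phi(z)\vectorones[n] = \gamma^{-1}z\,\vectorones[n] = \bar x$. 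Therefore $-\bar x + \Phi(W\bar x) = 0$ and $\bar x$ is an equilibrium of~\eqref{eq:FR}, which proves part~(ii) for each solution $z$.

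For part~(i) I would introduce the continuous function $g(z) := \phi(z) - \gamma^{-1}z$ and locate a zero inside $[\I_0,\I_1]$. The key point is that, although $\phi$ is only assumed continuous and weakly increasing (so its growth at infinity is uncontrolled), its endpoint values are pinned by $\phi(\I_0) = x_0$ and $\phi(\I_1) = x_1$. Writing $\gamma^{-1} = x_p/\I_p$ with $\I_p = p\I_1 + (1-p)\I_0$ and $x_p = px_1 + (1-p)x_0$, a direct computation gives
\begin{equation*}
  g(\I_0) = \frac{p}{\I_p}\bigl(x_0\I_1 - x_1\I_0\bigr),
  \qquad
  g(\I_1) = -\frac{1-p}{\I_p}\bigl(x_0\I_1 - x_1\I_0\bigr),
\end{equation*}
so that $g(\I_0)\,g(\I_1) = -\tfrac{p(1-p)}{\I_p^2}\,(x_0\I_1 - x_1\I_0)^2 \le 0$. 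Thus $g(\I_0)$ and $g(\I_1)$ have opposite signs unless $x_0\I_1 = x_1\I_0$, in which case both vanish and $\I_0,\I_1$ are themselves solutions (note that $x_0\I_1 = x_1\I_0$ is precisely the degenerate Hopfield-type condition~\eqref{eq:condantim}). In either case the intermediate value theorem furnishes a zero $z\in[\I_0,\I_1]$ of $g$, i.e.\ a solution of~\eqref{eq:puntfiss}. Geometrically, this is the statement that the line $x=\gamma^{-1}\I$ through the origin and $(\I_p,x_p)$ crosses the graph of $\phi$ between $\I_0$ and $\I_1$, since it meets the secant joining $(\I_0,x_0)$ and $(\I_1,x_1)$ at the interior point $(\I_p,x_p)$ with $\I_0<\I_p<\I_1$.

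The main obstacle I anticipate is precisely this existence claim: because Assumption~\ref{assum:activation-function} does not impose boundedness of $\phi$, a naive argument based on the limits of $g$ as $z\to\pm\infty$ fails when $\gamma>0$ and $\phi$ grows super-linearly. The device that circumvents this is to evaluate $g$ at the two admissible currents rather than at infinity, exploiting that $(\I_p,x_p)$ is a strict convex combination lying between $\I_0$ and $\I_1$ on the secant; this confines the sign change to the compact interval $[\I_0,\I_1]$ and makes the conclusion independent of the growth of $\phi$. Everything else — the eigenvector identity and the homogeneity of $\Phi$ — is routine, and I would only double-check that $\I_p\ne 0$ (equivalently $\gamma\ne 0$, which is implicit in $\gamma^{-1}$ being well defined) so that the closed-form expressions for $g(\I_0)$ and $g(\I_1)$ are valid.
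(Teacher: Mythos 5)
Your proposal is correct and follows essentially the same route as the paper's proof: part (ii) via the eigenvector identity $W\vectorones[n]=\gamma\vectorones[n]$ (which you verify explicitly, where the paper merely asserts it), and part (i) via the sign computation $g(\I_0)g(\I_1)=-\tfrac{p(1-p)}{\I_p^2}(x_0\I_1-x_1\I_0)^2\le 0$ followed by the intermediate value theorem on $[\I_0,\I_1]$, exactly as in the Appendix. Your explicit handling of the degenerate case $x_0\I_1=x_1\I_0$ and the remark that $\gamma\ne 0$ is implicitly required are minor refinements of the same argument, not a different approach.
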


Equation \eqref{eq:puntfiss} has a simple graphical interpretation. Indeed, its solutions result from the intersections between the graph of $\phi(z)$ and the straight line $\gamma^{-1} z$. 
Observe that in case $\gamma<0$, then $f(z):=\phi(z)-\gamma^{-1} z$ is strictly increasing and such that $f(-\infty)=-\infty$ and $f(+\infty)=+\infty$.
Hence there always exists exactly one solution to equation \eqref{eq:puntfiss}. If instead $\gamma>0$, then there might be multiple solutions (see Figure \ref{fig:exspurious}).

\begin{figure}[h!]
    \centering
    \subfloat[$\gamma$ large]{\includegraphics[width=.5\columnwidth]{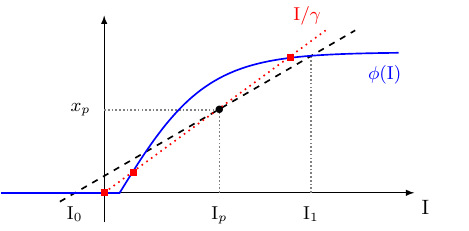}}
    \subfloat[$\gamma$ small]{\includegraphics[width=.5\columnwidth]{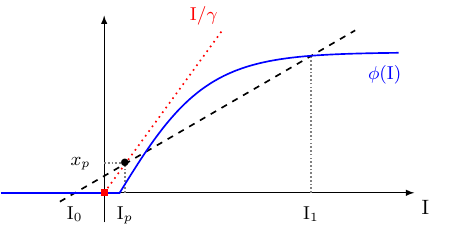}}

    \caption{Graphical illustration of the existence of homogeneous equilibria as $\gamma$ varies. The red squares denote the solutions of equation \eqref{eq:puntfiss}, each associated with a homogeneous equilibrium point.}
    \label{fig:exspurious}
\end{figure}

\section{Stability analysis}

In this section, we examine the stability of equilibria of the \emph{firing rate} dynamics \eqref{eq:FR} with synaptic matrix $W$ constructed as in the previous section.

\subsection{On the local stability of retrievable memories}

A sufficient condition that ensures local asymptotic stability of an
equilibrium point can be derived via Lyapunov's indirect method
\citep[Theorem 4.7]{K:02}. Specifically, if the Jacobian matrix of \eqref{eq:FR}
evaluated at an equilibrium point $\bar x$, namely
\begin{equation}
J(\bar x) = -I+\diag(\Phi'(W\bar{x}))W,\label{eq:Jacobian}
\end{equation}
has all its eigenvalues with strictly negative real part, then $\bar{x}$ is
locally asymptotically stable.

Building on this result, we next establish a sufficient condition for the
local stability of the retrievable memories as equilibria of \eqref{eq:FR}.
The proof is postponed to Appendix.

\begin{theorem}[Local stability condition]\label{thm:stab}
   With the same notation and under the same assumptions as
    Theorem~\ref{thm:W}, if
  \begin{equation}\label{eq:stability}
    \max\bigl\{\phi'(\I_{0}),\phi'(\I_{1})\bigr\}\max\{\alpha,\gamma\}<1,
  \end{equation} 
  then each retrievable memory $\bar\xi^\mu$ is locally asymptotically
  stable for the \emph{firing rate} model~\eqref{eq:FR}.
\end{theorem}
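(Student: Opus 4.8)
The plan is to invoke Lyapunov's indirect method \citep[Theorem~4.7]{K:02} and show that the Jacobian $J(\bar\xi^\mu) = -I + \diag(\Phi'(W\bar\xi^\mu))\,W$ in \eqref{eq:Jacobian} is Hurwitz whenever \eqref{eq:stability} holds. The first step is to identify the diagonal matrix $D := \diag(\Phi'(W\bar\xi^\mu))$. Writing $\eta^\mu := \xi^\mu - p\vectorones[n]$, the sparsity and correlation Assumption~\ref{assum:memories} yields $(\eta^\mu)^\top\eta^\nu = p(1-p)n\,\delta_{\mu\nu}$ and $(\eta^\mu)^\top\vectorones[n]=0$, so the centred memories together with $\vectorones[n]$ form an orthogonal family. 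Using this orthogonality and the parameter choices \eqref{eq:alpha}--\eqref{eq:gamma}, I would evaluate $W\bar\xi^\mu$ directly and obtain $W\bar\xi^\mu = (\I_{1}-\I_{0})\xi^\mu + \I_{0}\vectorones[n]$; that is, $(W\bar\xi^\mu)_i$ equals $\I_{1}$ when $\xi^\mu_i=1$ and $\I_{0}$ when $\xi^\mu_i=0$. This is the very computation underlying Theorem~\ref{thm:W}, so it may be quoted from there. Hence every diagonal entry of $D$ lies in $\{\phi'(\I_{0}),\phi'(\I_{1})\}$, and since $\phi$ is weakly increasing these entries are non-negative; writing $d:=\max\{\phi'(\I_{0}),\phi'(\I_{1})\}$ one has $0\preceq D\preceq d\,I$.

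The second step addresses the non-symmetry of $DW$. Setting $A := D^{1/2}$ and $B := D^{1/2}W$, the elementary fact that $AB$ and $BA$ have the same characteristic polynomial shows that $DW = AB$ has exactly the eigenvalues of $BA = D^{1/2}WD^{1/2}$, which is symmetric because $W=W^\top$; consequently all eigenvalues of $DW$ are real, and this argument requires no invertibility of $D$. It therefore suffices to bound $\lambda_{\max}(D^{1/2}WD^{1/2})$ strictly below $1$.

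For the spectral bound, the same orthogonality gives $W = \alpha\sum_{\mu}\hat\eta^\mu(\hat\eta^\mu)^\top + \gamma\,\hat{\vectorones}\hat{\vectorones}^\top$ with $\hat\eta^\mu := \eta^\mu/\|\eta^\mu\|$ and $\hat{\vectorones} := \vectorones[n]/\sqrt{n}$, so $W$ has eigenvalues $\alpha$ (multiplicity $P$), $\gamma$ (multiplicity $1$), and $0$; since $\alpha>0$ by \eqref{eq:alpha}, $\lambda_{\max}(W)=\max\{\alpha,\gamma\}$. Let $W^{+}\succeq 0$ be the positive-semidefinite part of $W$, obtained by discarding the $\gamma\,\hat{\vectorones}\hat{\vectorones}^\top$ term when $\gamma<0$, so that $W\preceq W^{+}$ and $\lambda_{\max}(W^{+})=\max\{\alpha,\gamma\}$ in every case. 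Then $D^{1/2}WD^{1/2}\preceq D^{1/2}W^{+}D^{1/2}$, and a Rayleigh-quotient estimate using $D\preceq d\,I$ gives, for any $v$, $v^\top D^{1/2}W^{+}D^{1/2}v = (D^{1/2}v)^\top W^{+}(D^{1/2}v)\le \lambda_{\max}(W^{+})\,v^\top D v\le d\,\lambda_{\max}(W^{+})\,\|v\|^2$. Hence $\lambda_{\max}(D^{1/2}WD^{1/2})\le d\,\max\{\alpha,\gamma\}$, which is strictly below $1$ by the stability hypothesis \eqref{eq:stability}.

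Combining the steps, every eigenvalue of $DW$ is real and at most $\lambda_{\max}(DW)=\lambda_{\max}(D^{1/2}WD^{1/2})<1$, so every eigenvalue of $J(\bar\xi^\mu)=-I+DW$ is real and strictly negative, and $\bar\xi^\mu$ is locally asymptotically stable. I expect the main obstacle to be the middle step: $DW$ is genuinely non-symmetric and $\gamma$ may be negative, so symmetric perturbation bounds do not apply verbatim. The two devices that resolve this—the $AB$/$BA$ similarity, which secures real eigenvalues without assuming $D\succ 0$, and discarding the negative-semidefinite homeostatic term to reduce to a clean congruence estimate—are the crux; the orthogonality relations and the Rayleigh-quotient bound are then routine.
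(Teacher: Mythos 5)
Your proof is correct, and it shares the paper's overall skeleton (Lyapunov's indirect method, symmetrization of $DW$, a spectral bound on $W$), but both key technical steps are carried out by genuinely different devices. For the non-symmetry and possible singularity of $D$, the paper (Lemma~\ref{lemma:stab-gen} in the Appendix) splits the coordinates according to whether the diagonal entries of $D$ vanish, observes that $J(\bar\xi^\mu)$ is then block upper triangular, and applies the congruence $D_1^{1/2}W_{11}D_1^{1/2}\prec I_{n_1}$ only on the nonsingular block; your $AB$/$BA$ argument (that $DW$ and $D^{1/2}WD^{1/2}$ share a characteristic polynomial) reaches the same conclusion in one stroke, with no case analysis and no invertibility assumption, which is arguably cleaner. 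For the spectrum of $W$, you exploit the fact that under Assumption~\ref{assum:memories} the centered memories $\xi^\mu-p\vectorones[n]$ and $\vectorones[n]$ are mutually orthogonal, so that \eqref{eq:W-map} is already an eigendecomposition with spectrum $\{\alpha \text{ (mult.\ } P\text{)},\gamma,0\}$; the paper instead proves the more general Theorem~\ref{thm:stab-gen}, in which the memories satisfy \eqref{eq:memories-c-G} with $r\neq p$ and are no longer orthogonal, by showing that $\mathrm{Im}([\vectorones[n],\bar\xi^1,\dots,\bar\xi^P])$ is $W$-invariant and reducing the eigenvalue computation to an explicit $(P+1)\times(P+1)$ matrix $\bar W$. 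So your route buys simplicity but is confined to the equal-correlation case $r=p$ (exactly what Theorem~\ref{thm:stab} requires), whereas the paper's route buys the generality needed for its appendix-level statement. Two minor observations: your $W^{+}$ device is superfluous, since $\alpha>0$ forces $\lambda_{\max}(W)=\max\{\alpha,\gamma\}>0$ and the Rayleigh estimate $v^\top D^{1/2}WD^{1/2}v\le\lambda_{\max}(W)\,v^\top Dv\le d\,\lambda_{\max}(W)\,\|v\|^2$ therefore applies to $W$ directly; and your orthogonality relation $(\eta^\mu)^\top\eta^\mu=p(1-p)n$ silently uses $(\xi^\mu)^\top\xi^\mu=\vectorones[n]^\top\xi^\mu=pn$, which holds only because the patterns are binary and is worth stating explicitly.
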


\begin{remark}[The canonical case of \cite{AD:05}]
For the memories construction presented in \cite[Section~7.4]{AD:05}, since $\gamma=-\frac{1}{p}<0$, $\alpha=\lambda>0$, and $\phi'(\I_{0})=0$, the stability condition \eqref{eq:stability} is simply
$$\phi'(\I_{1})<\lambda^{-1}.$$ 
\end{remark}

Theorem \ref{thm:stab} indicates that the derivative of the activation function $\phi(\cdot)$ at the input coordinates $\I_{0}$ and $\I_{1}$ plays an important role in the stability of the retrievable memory patterns $\{\bar\xi^{\mu}\}_{\mu=1}^P$. Specifically, the smaller are $\phi'(\I_{0})$ and $\phi'(\I_{1})$, the smaller is the left-hand side of \eqref{eq:stability}, suggesting that the retrievable memories are more likely to be stable. In addition, the slope of the straight line intersecting the activation function at $\I_{0}$ and $\I_{1}$ also affects the condition \eqref{eq:stability} through parameter $\alpha$. 
In particular, when $\gamma\le \alpha$, stability is guaranteed if both $\phi'(\I_{0})<1/\alpha$ and $\phi'(\I_{1})<1/\alpha$, which means that the straight line intersecting the activation function at the points $(\I_{0},x_{0})$ and $(\I_{1},x_{1})$ has to intersect it from below (see Figure \ref{fig:stability}).

\begin{remark}[The case $\gamma\le\alpha$]
We have seen that when $\gamma\le\alpha$, the stability condition simplifies to $\max\bigl\{\phi'(\I_{0}),\phi'(\I_{1})\bigr\}<\alpha^{-1}$. It can be proved that $\gamma\le\alpha$ holds if and only if $\I_0x_1\le \I_1 x_0$. Observe that this condition always holds if $\I_0\le 0\le \I_1$.
\end{remark}

\begin{figure}
    \centering
    \includegraphics[width=1\linewidth]{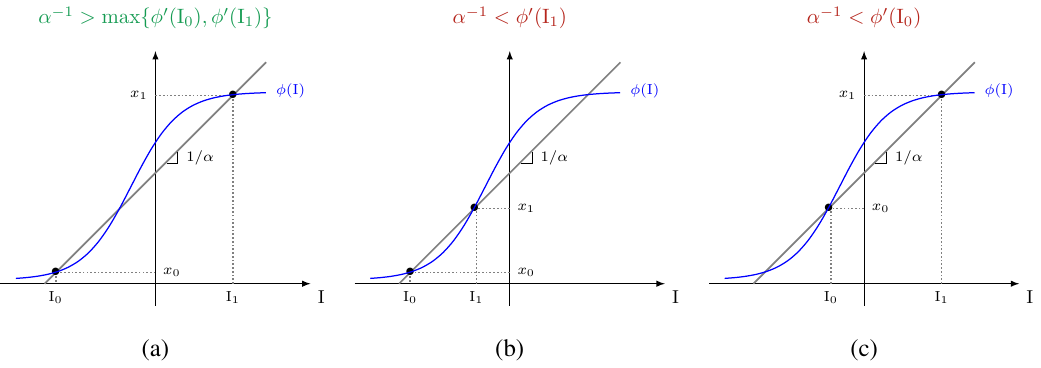}
    \caption{Graphical interpretation of the local stability condition \eqref{eq:stability} when $\gamma\le \alpha$. In panel (a) condition \eqref{eq:stability} is satisfied, while in panels (b)-(c) condition \eqref{eq:stability} is not satisfied and hence Theorem \ref{thm:stab} cannot be applied. }
    \label{fig:stability}
\end{figure}

\begin{remark}[Instability condition]
  Techniques similar to those in the proof of Theorem \ref{thm:stab} lead
  to the following statement: if
  \begin{equation} \label{eq:instability}
  \max\{\phi'(\I_{0})[p\alpha+(1-p)\gamma],\phi'(\I_{1})[(1-p)\alpha+p\gamma]\}>1,
  \end{equation}
  then the equilibria $\{\bar{\xi}^{\mu}\}_{\mu=1}^P$ of \eqref{eq:FR} are
  unstable.
\end{remark}

\begin{remark}[homogeneous equilibria]
The local stability of homogeneous equilibria introduced in Section \ref{subsec:hom} can be analyzed with similar arguments used for proving Theorem \ref{thm:stab}. Specifically, if $\bar x=\gamma^{-1} z\vectorones[n]$ an equilibrium point for \eqref{eq:FR}, where $z$ is a real
number satisfying equation \eqref{eq:puntfiss}, then it is easy to see that $\bar x$ is locally stable if
\begin{equation*}
  \phi'(z)\max\{\alpha,\gamma\}<1.
\end{equation*}
It can also be proved that such equilibrium point is unstable if instead
\begin{equation*}
  \phi'(z)\max\{\alpha,\gamma\}>1,
\end{equation*}
In the special case in which $\gamma\le\alpha$,
the stability condition simplifies to $\phi'(z)<\alpha^{-1}$. 

\end{remark}

\subsection{On the global stability of retrievable memories}

We now present a result on the global behavior of the trajectories of \eqref{eq:FR} based on an~energetic characterization of the \emph{firing rate} model.

The function
\begin{equation}\label{eq:energy-FR}
    \subscr{E}{FR}(x)\! =\! -\frac{1}{2} x^\top W x + \sum_{i=1}^n \! \int_0^{x_i}\!\!\! \phi^{-1}(z)\, \de z
\end{equation}
will serve as energy for the \emph{firing rate} model \eqref{eq:FR}, where $\phi^{-1}$ denotes any right inverse of $\phi$.\footnote{A right inverse of a function $f\colon X\to Y$ is a function $g\colon Y\to X$ such that $f(g(y))=y$ for all $y\in Y$.} Note that \eqref{eq:energy-FR} coincides with the classic Hopfield energy \eqref{eq:E} expressed in the variable $y=\Phi(x)$. The proof of the following theorem can be found in the Appendix.

\begin{theorem}[Global convergence to equilibria]\label{thm:conv}
With the same notation and under the same assumptions as Theorem~\ref{thm:W}, assume that the activation function $\phi(\cdot)$ takes value in a bounded interval $\mathcal{I}$.  If $x(0)\in\mathcal{I}^n$, then $\subscr{\dot{E}}{FR}\le 0$ along the flow of \eqref{eq:FR} and each trajectory of \eqref{eq:FR} converges to the set of equilibrium points of~\eqref{eq:FR}.
\end{theorem}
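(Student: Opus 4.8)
The plan is to establish convergence through LaSalle's invariance principle, using $\subscr{E}{FR}$ in \eqref{eq:energy-FR} as a weak Lyapunov function, in the spirit of the classical Hopfield computation \eqref{eq:DE}. First I would verify that the hypothesis $x(0)\in\mathcal{I}^n$ confines the whole trajectory to $\mathcal{I}^n$. Taking $\mathcal{I}$ to be the (bounded) interval of values of $\phi$ and noting that $\Phi(Wx)\in\mathcal{I}^n$ for every $x$, the integral representation $x(t)=e^{-t}x(0)+\int_0^t e^{-(t-s)}\Phi(Wx(s))\,\de s$ exhibits $x(t)$ as a convex combination of points of $\mathcal{I}^n$; hence $\mathcal{I}^n$ is forward invariant. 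Equivalently, on the face $x_i=\sup\mathcal{I}$ one has $\dot x_i=-x_i+\phi((Wx)_i)\le 0$ and on $x_i=\inf\mathcal{I}$ one has $\dot x_i\ge 0$, so the vector field points inward. Since $\mathcal{I}$ is bounded, the trajectory lies in the compact set $\overline{\mathcal{I}}^{\,n}$, which supplies the compactness required by LaSalle and keeps $\phi^{-1}(x_i(t))$ well defined.

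The core computation evaluates $\subscr{\dot{E}}{FR}$ along \eqref{eq:FR}. Since the matrix $W$ in \eqref{eq:W-map} is symmetric, the gradient of \eqref{eq:energy-FR} is $\nabla\subscr{E}{FR}(x)=-Wx+\phi^{-1}(x)$, with $\phi^{-1}$ acting entrywise. Writing $v:=Wx$ and $w:=\phi^{-1}(x)$, so that $\Phi(w)=x$ by the right-inverse property and the dynamics read $\dot x=-x+\Phi(v)=\Phi(v)-\Phi(w)$, one obtains
\begin{equation*}
  \subscr{\dot{E}}{FR}=\bigl(w-v\bigr)^\top\bigl(\Phi(v)-\Phi(w)\bigr)=\sum_{i=1}^n (w_i-v_i)\bigl(\phi(v_i)-\phi(w_i)\bigr)\le 0,
\end{equation*}
because each summand is a product of two quantities of opposite sign, by the weak monotonicity of $\phi$ (Assumption~\ref{assum:activation-function}); this is the firing-rate counterpart of \eqref{eq:DE}. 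I would emphasize that the sign is insensitive to the particular right inverse chosen, since $\phi(v_i)>\phi(w_i)$ forces $v_i>w_i$ and conversely.

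Next I would identify $\{x:\subscr{\dot{E}}{FR}=0\}$ with the equilibrium set of \eqref{eq:FR}. The derivative vanishes exactly when every summand is zero, i.e.\ when for each $i$ either $v_i=w_i$ or $\phi(v_i)=\phi(w_i)$; in both cases $\phi(v_i)=\phi(w_i)=x_i$, so $\phi((Wx)_i)=x_i$ and $\dot x_i=0$. The converse is immediate, hence $\{\subscr{\dot{E}}{FR}=0\}$ coincides with the set of equilibria, which is trivially invariant. LaSalle's invariance principle on the compact invariant set $\overline{\mathcal{I}}^{\,n}$ then yields that every trajectory converges to this set, as claimed.

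The step I expect to require the most care is not the algebra but the regularity beneath it. Since $\phi$ is only assumed continuous and weakly increasing, its right inverse $\phi^{-1}$ may be discontinuous on the flat portions of $\phi$, so $\subscr{E}{FR}$ need not be $C^1$ and the chain-rule evaluation must be justified—e.g.\ by showing that $t\mapsto\subscr{E}{FR}(x(t))$ is absolutely continuous with $\subscr{\dot{E}}{FR}\le 0$ almost everywhere, and by invoking a form of the invariance principle valid for merely continuous Lyapunov functions. One also has to confirm that $x(t)$ stays inside the range of $\phi$, so that $w=\phi^{-1}(x)$ exists; this again follows from the convex-combination representation of $x(t)$ together with the fact that the range of a continuous weakly increasing $\phi$ is an interval. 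Once these technical points are secured, the monotonicity identity above closes the argument.
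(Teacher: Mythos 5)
Your proposal is correct and takes essentially the same route as the paper: forward invariance of a compact box containing the trajectory (the paper proves $[\phi(-M),\phi(M)]^n$ invariant via Nagumo's theorem, you via the variation-of-constants representation plus the same boundary argument), the identical computation $\subscr{\dot{E}}{FR}=\sum_{i}(\phi^{-1}(x_i)-[Wx]_i)(\phi([Wx]_i)-x_i)\le 0$ exploiting the symmetry of $W$ and weak monotonicity of $\phi$, the same identification of $\{\subscr{\dot{E}}{FR}=0\}$ with the equilibrium set, and LaSalle's invariance principle. Even your regularity caveat is handled the same way in the paper, which notes that $\subscr{E}{FR}$ is Lipschitz on the invariant set and invokes the version of the invariance principle in LaSalle (1968) valid in that setting.
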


\section{Illustrative examples}

The aim of this section is to provide hints to the reader on the design choices that result in effective associative memory \emph{firing rate} networks. Choosing the proper combination of parameters to set up an effective \emph{firing rate} model can be challenging due to the presence of five free parameters; namely, the average activation $p$, the input currents $\I_{0}$, $\I_{1}$, the \emph{gain factor} $\rho$ and the \emph{activation current} $\I^{\star}$. The latter two parameters were not previously discussed, but as we will see while we study two relevant examples, they are of critical importance in determining the stability properties of the system. Therefore, we will begin by fixing three out of the five free parameters, and study the stability properties of the system as the other two varies. Specifically, taking inspiration from neurobiological reality \citep{WG:14} we will fix the average neural activation as $p=0.2$. We will then continue with the definition of two paradigmatic examples of activation function, and numerically study the stability properties of the system as the slope and offset parameters vary. The stability condition \eqref{eq:stability} is independent of network size; instead, the numerical stability of the \emph{firing rate} model is studied for a network with $n=1000$ and $P=6$ (see the Appendix for details on the construction of the deterministic memories). We considered the \emph{firing rate} system to have numerically stable retrievable memories when the Jacobian of the vector field \eqref{eq:FR} evaluated at each of the memories has all eigenvalues with negative real part. Additionally, we will present the energy function associated to specific \emph{firing rate} models where the slope and offset parameters have been fixed to values that ensure either the stability or instability of the memories as equilibria for the system. In order to do so, we will evaluate the energy on a mesh interpolating the space between two memories as
\begin{align}
    x = t_{1}\xi^{1}+t_{2}\xi^{2}, \quad t_{1},t_{2}\in[0,1].
\end{align}
The energy profiles are plotted only for values of $t_{1}, t_{2}$ such that $x\in[0,1]^{n}$. 

Finally, we test the retrieval performances of the \emph{firing rate} model using the average overlap parameter
\begin{equation}
    s^{\nu}(t)=\frac{x(t)^{\top}\xi^{\nu}}{pn}
\end{equation}
where the average is taken over all the $\nu=1,\dots,P$. In this testing phase, the prototypical memories $\{\xi^{\mu}\}_{\mu=1}^{P}$ are drawn randomly so that the equal sparsity and equal correlation constraints are satisfied in expectation. Each retrieval is characterized by dynamics initialized arbitrarily close to the pattern $\bar \xi^{\nu}$ and, if the memory is a stable equilibria of the system, then by the equal sparsity constraint there exists a $\bar t>0$ such that $s^{\nu}(t)\approx{1}$ for all $t>\bar t$. Conversely, by the equal correlation constraint we would expect that $s^{\mu}(t)\approx{p}$ for all $t>\bar t$ and $\mu\neq\nu$.

\subsection{Rectified activation functions}
Consider the rectified hyperbolic tangent activation (Fig.~\ref{fig:rect-tanh})
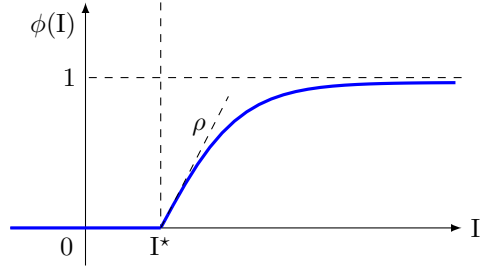
\begin{figure}[!h]
\begin{center}
\begin{tikzpicture}
\draw[-latex] (-2,0) -- (4,0) node[right] {$\I$};
\draw[-latex] (-1,-0.5) -- (-1,3) node[below left] {$\phi(\I)$};
\draw[dashed] (0,3) -- (0,0) node[below]{$\I^{\star}$};
\draw[dashed] (4,2) -- (-1,2) node[left]{$1$};
\node at (-1.25,-0.25) {$0$};
\draw[blue,very thick] (-2,0) -- (0,0);
\begin{axis}[
    height=4cm,
    width=5.5cm,
    xmin=0, xmax=5,
    ymin=0, ymax=1.25,
    axis lines=center,
    axis on top=true,
    hide axis,
    domain=0:5,
    ]
    \addplot [mark=none,draw=blue,very thick] {tanh(0.75*\x)};
\end{axis}
\draw[dashed] (0,0) -- (0.9,1.75) node[below left=0.25cm]{$\rho$};
\end{tikzpicture}
\end{center}
\caption{Rectified hyperbolic tangent activation function \eqref{eq:rect-tanh}.}
\label{fig:rect-tanh}
\end{figure}
\begin{align}\label{eq:rect-tanh}
\phi(\I) = \begin{cases}\tanh(\rho (\I-\I^{\star})), & x> \I^{\star},\\ 0, & x\le \I^{\star},\end{cases}
\end{align}
where the parameter $\I^{\star}\in\real$ is the largest input current that yields a zero output and will be referred to as  \emph{activation current}, and $\rho\in\real$, $\rho>0$, equals the maximal derivative of $\phi(\cdot)$ and will be termed \emph{gain factor} (see also Fig.~\ref{fig:rect-tanh}). This function is typically used to replicate the \emph{firing rate} response to constant input currents observed in biological neural networks \citep[Ch.~2]{AD:05}. In Fig.~\ref{fig:PD2}(a,d) we study the stability properties of the system as a function of the parameters $\rho$, $\I^{\star}$, and the sign of the homeostatic strength $\gamma$. As observable, a negative homeostatic strength $\gamma$, which coincides with inhibitory feedback, results in a wider stability region for the \emph{firing rate} system. Figures \ref{fig:PD2}(b) and \ref{fig:PD2}(e) plot the energy associated with \emph{firing rate} models with, $\rho=4.8$, $\I^{\star}=0.2$ for stable retrievable memories
(blue square marker in Fig.~\ref{fig:PD2}(a)) and $\I^{\star}=0.8$ for unstable retrievable memories (red starred marker in Fig.~\ref{fig:PD2}(a)). As observable from Fig.~\ref{fig:PD2}(c), when the network parameters ensure the stability of the memories, the \emph{firing rate} model is capable of performing the correct retrieval of the intended memory. Instead, when the network parameters lie in the instability region, the \emph{firing rate} model activity collapses to a state of almost total inactivation (Fig.~\ref{fig:PD2}(f)), compatible with the energy profile in Fig.~\ref{fig:PD2}(e).

\subsection{Sigmoidal activation functions}
Consider the sigmoidal activation function (Fig.~\ref{fig:sigm})
\begin{figure}[!tbph]
\begin{center}
\begin{tikzpicture}
\draw[-latex] (-2.1,0) -- (4,0) node[right] {$\I$};
\draw[-latex] (-0.5,-0.5) -- (-0.5,3) node[below left] {$\phi(\I)$};
\draw[dashed] (4,2) -- (-0.5,2) node[left]{\footnotesize $1$};
\draw[dashed] (4,1) -- (-0.5,1) node[left]{$\frac 1 2$};
\node at (-0.75,-0.25) {$0$}; \node at (-0.15,-0.25) {$\I^{\star}$};
\begin{scope}[xshift=-2cm,yshift=0.01cm]
\begin{axis}[
    height=4cm,
    width=7.5cm,
    xmin=-5, xmax=5,
    ymin=-0.01, ymax=1.25,
    axis lines=center,
    axis on top=true,
    hide axis,
    domain=-5:5,
    ]
    \addplot [mark=none,draw=blue,very thick] {1/(1+exp(-x))};
\end{axis}
\draw[dashed] (1.75,0) -- (4.25,2);
\draw[dashed] (3,0) -- (3,3);
\node at (3.3,1.55) {\small $\rho$};
\end{scope}
\end{tikzpicture}
\end{center}
\caption{Sigmoidal activation function \eqref{eq:sigm}.}
\label{fig:sigm}
\end{figure}
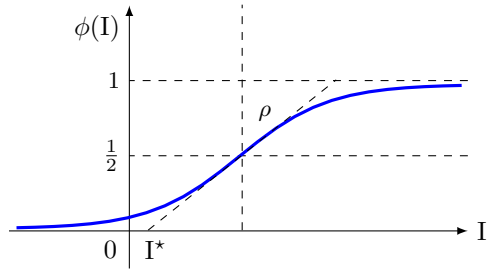
\begin{align}\label{eq:sigm}
\phi(\I) = \frac{1}{1+e^{-4\rho (\I-\I^{\star}-(2\rho)^{-1})}}, \ \ \ \rho,\, \I^{\star}\in \real,\, \rho>0.
\end{align}

This function has a long history in machine learning research and is a
commonly employed activation function in (artificial) neural networks
\citep{GBB:11}, especially for binary classification problems where the
output is interpreted~as~a~probability. The sigmoidal
activation function offers a smoothed transition compared to the sharp
threshold of the rectified hyperbolic tangent, providing more gradual
changes in neural firing rates.  As such, it makes sense to define the
\emph{activation current} $\I^{\star}$ as the key point where neural
firing rates start to considerably rise, as observable from
Fig.~\ref{fig:sigm}.  Mathematically, $\I^{\star}$ is defined by
$\phi(\I^{\star} + (2\rho)^{-1}) = 1/2$, indicating the $\I$-axis
intersection with the tangent to $\phi$ at its inflection point.  In this
case, the \emph{gain factor} $\rho$ is scaled by $4$ to match the maximal
slope of the sigmoid to that of the rectified hyperbolic tangent,
allowing a direct comparison of stability effects across both functions. 
The aim is to study how the stability of the system varies depending on
the values of the parameters $\rho$ and $\I^{\star}$ as they vary along
some fixed intervals.  As shown in Fig.~\ref{fig:PD1}(a,d), areas
satisfying the stability condition \eqref{eq:stability} align closely
with regions of observed numerical stability, suggesting the accuracy of
the analytical stability criterion.  Furthermore, the area for the
stability condition widens when $\gamma<0$, whereas it shrinks
considerably to a narrow region for $\gamma>0$. Figures \ref{fig:PD1}(b)
and \ref{fig:PD1}(e) plot the energy associated with \emph{firing rate}
models with, $\rho=4.8$, $\I^{\star}=0.2$ for stable retrievable memories
(blue square marker in Fig.~\ref{fig:PD1}(a)) and $\I^{\star}=0.8$ for
unstable retrievable memories (red starred marker in
Fig.~\ref{fig:PD1}(a)). Crucially, when network parameters guarantee the
stability of retrievable memories (Fig.~\ref{fig:PD1}(c)), the dynamics
of the \emph{firing rate} model naturally converge toward
them. Conversely, when the network parameters make the retrievable
memories saddle points for the dynamics, the system's activity is
repelled from these states and moves toward the origin
(Fig.~\ref{fig:PD1}(f)).  Finally, as shown by a comparison of
Fig.~\ref{fig:PD2}(a,d) and Fig.~\ref{fig:PD1}(a,d), the smoothness of
the sigmoidal activation function allows for broader stability regions
across the parameter space, offering enhanced stability compared to the
rectified hyperbolic tangent.

\begin{figure}[!tbph]
    \centering
    \subfloat[$\gamma<0$]{\includegraphics[width=.3\columnwidth]{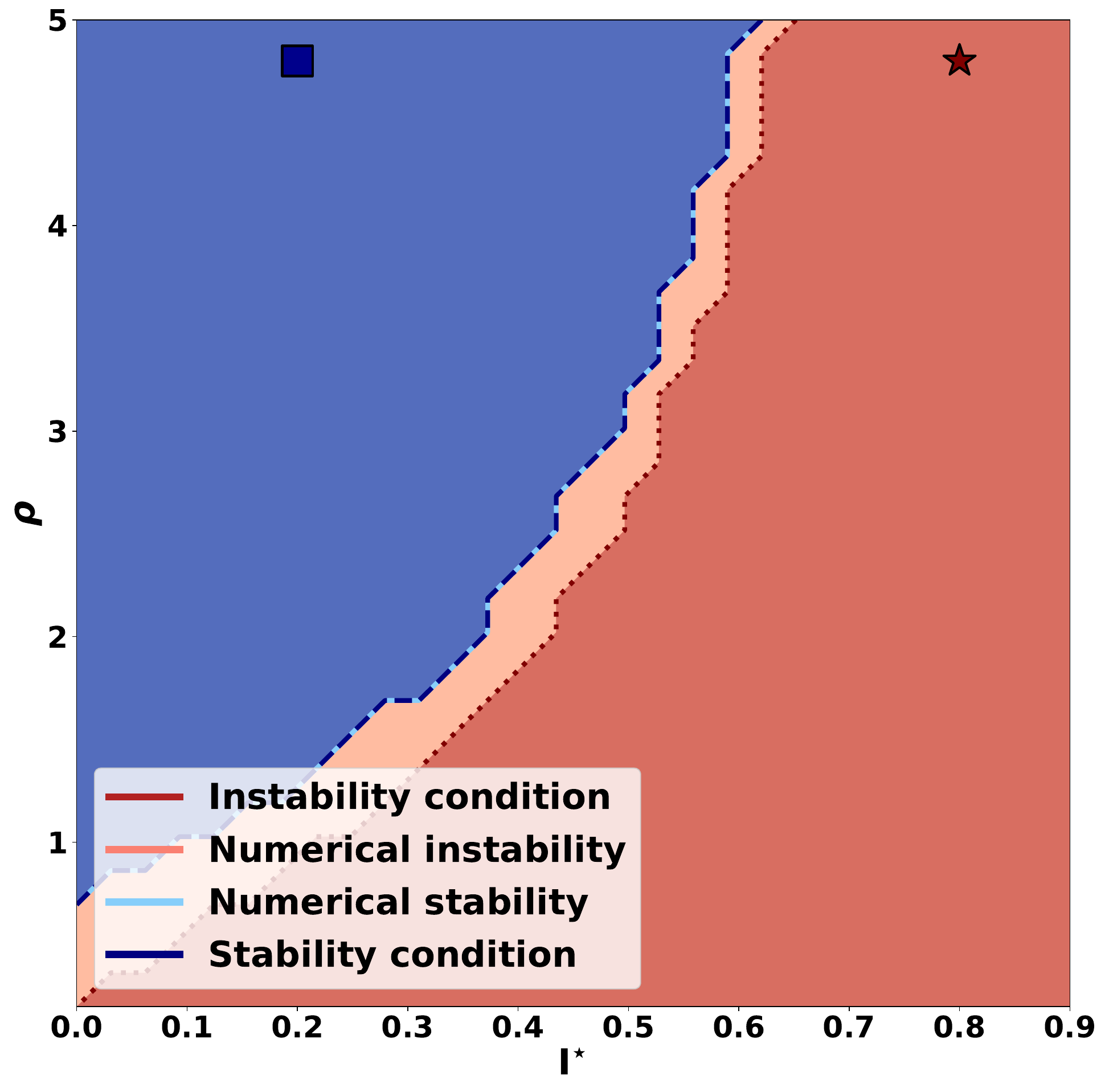}}
    \subfloat[Stable memories]{\includegraphics[width=.35\columnwidth]{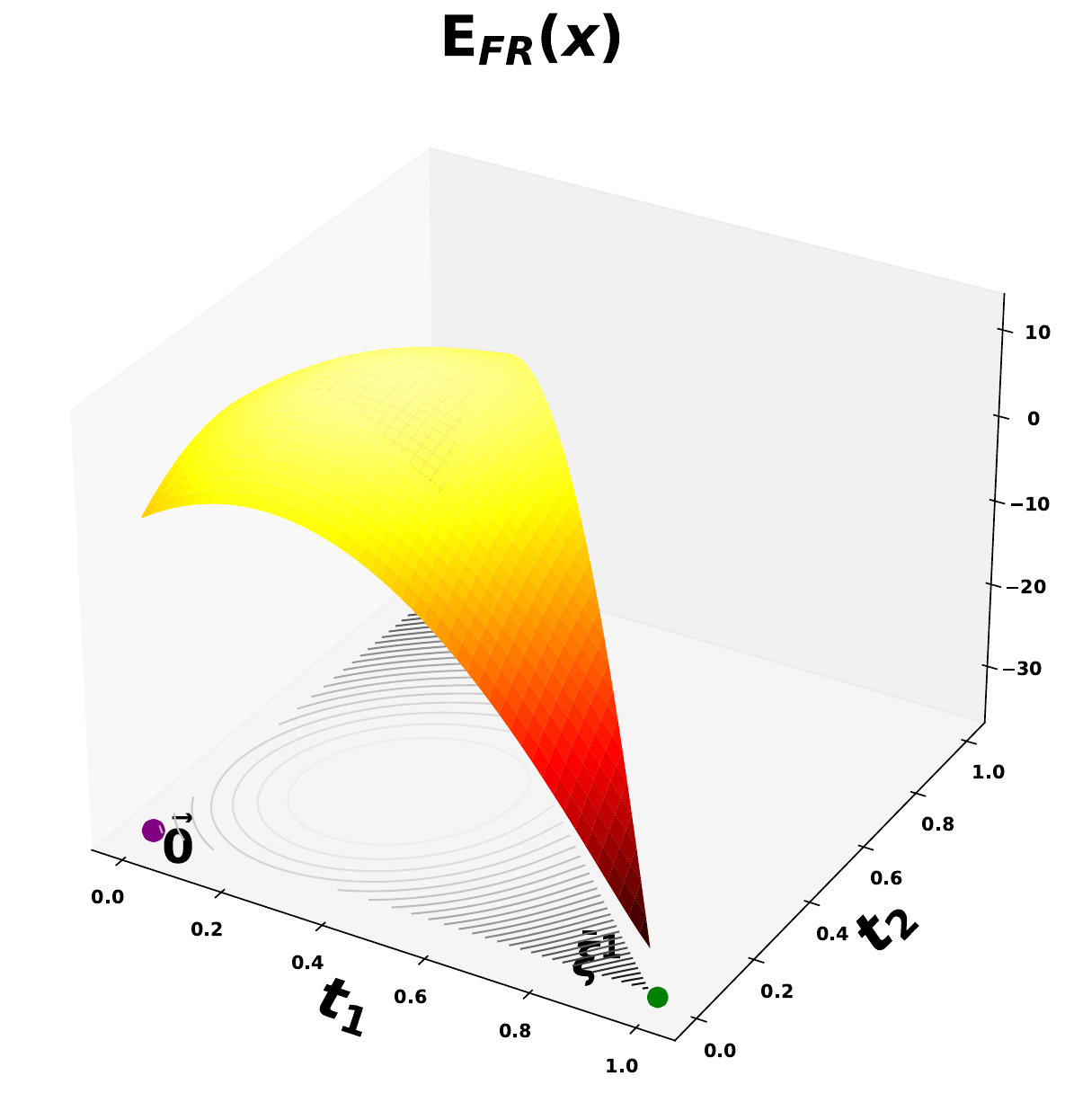}}
    \subfloat[Correct retrieval]{\includegraphics[width=.35\columnwidth]{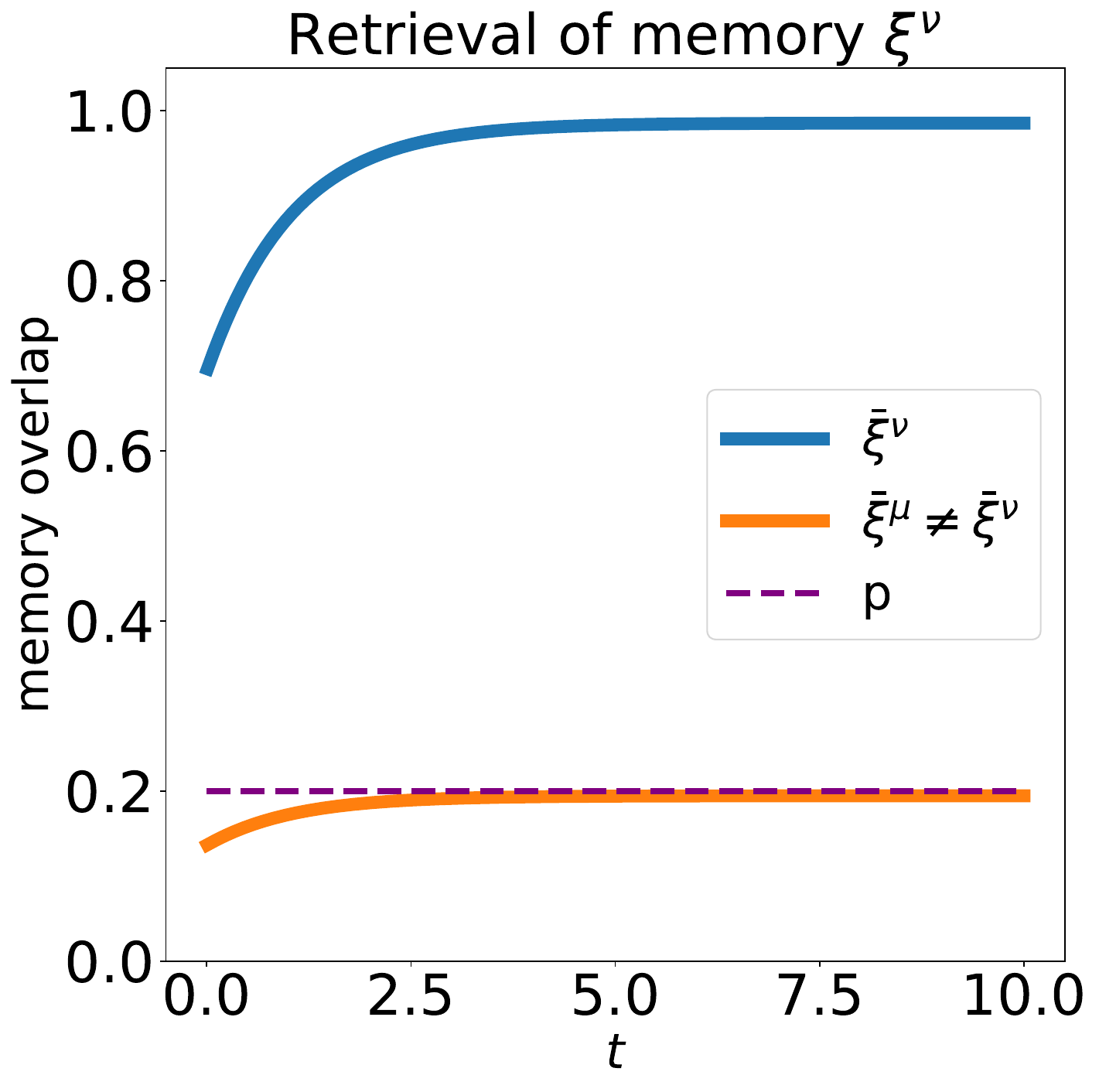}}
    
    \subfloat[$\gamma>0$]{\includegraphics[width=.3\columnwidth]{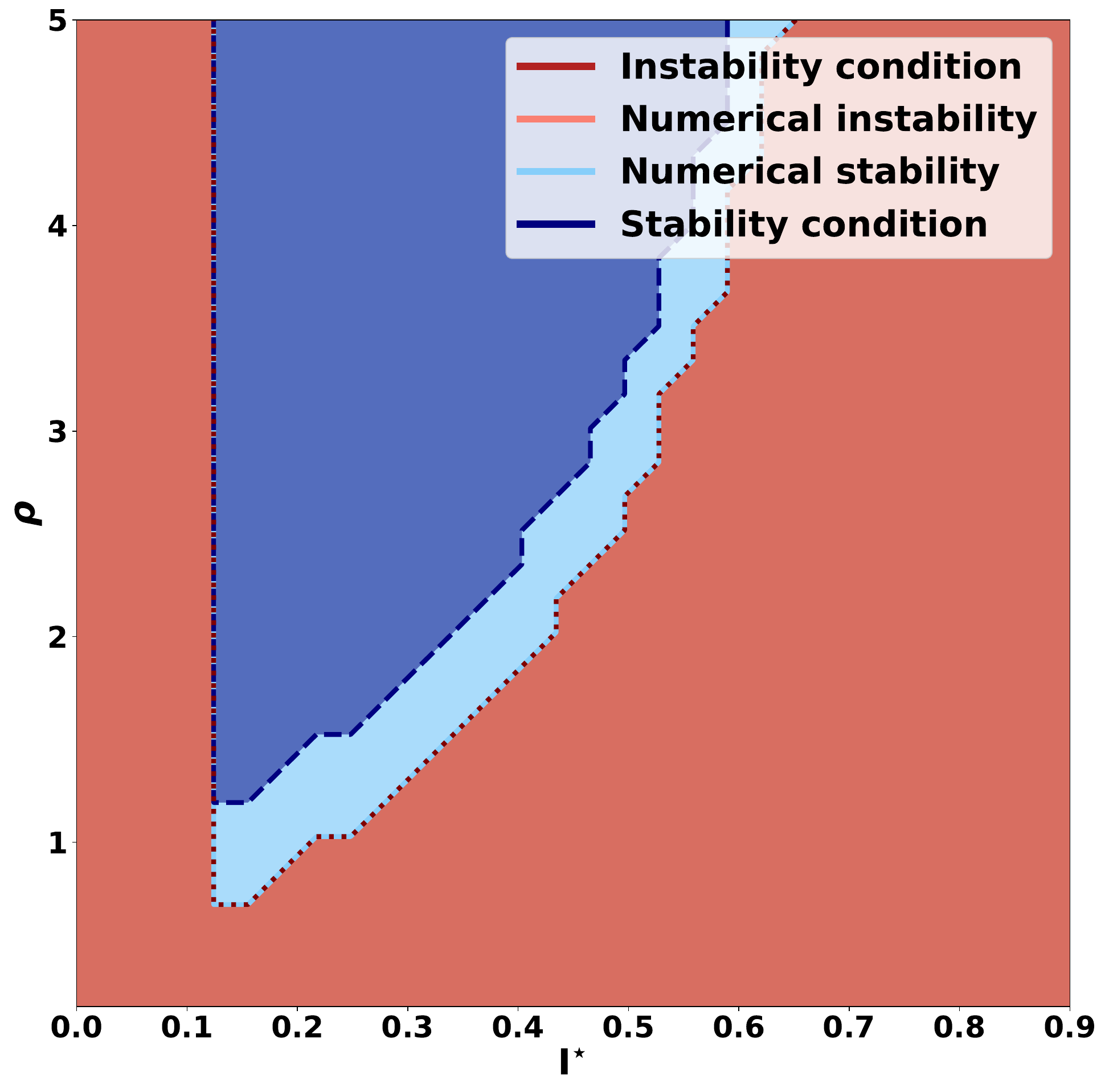}}
    \subfloat[Unstable memories]{\includegraphics[width=.35\columnwidth]{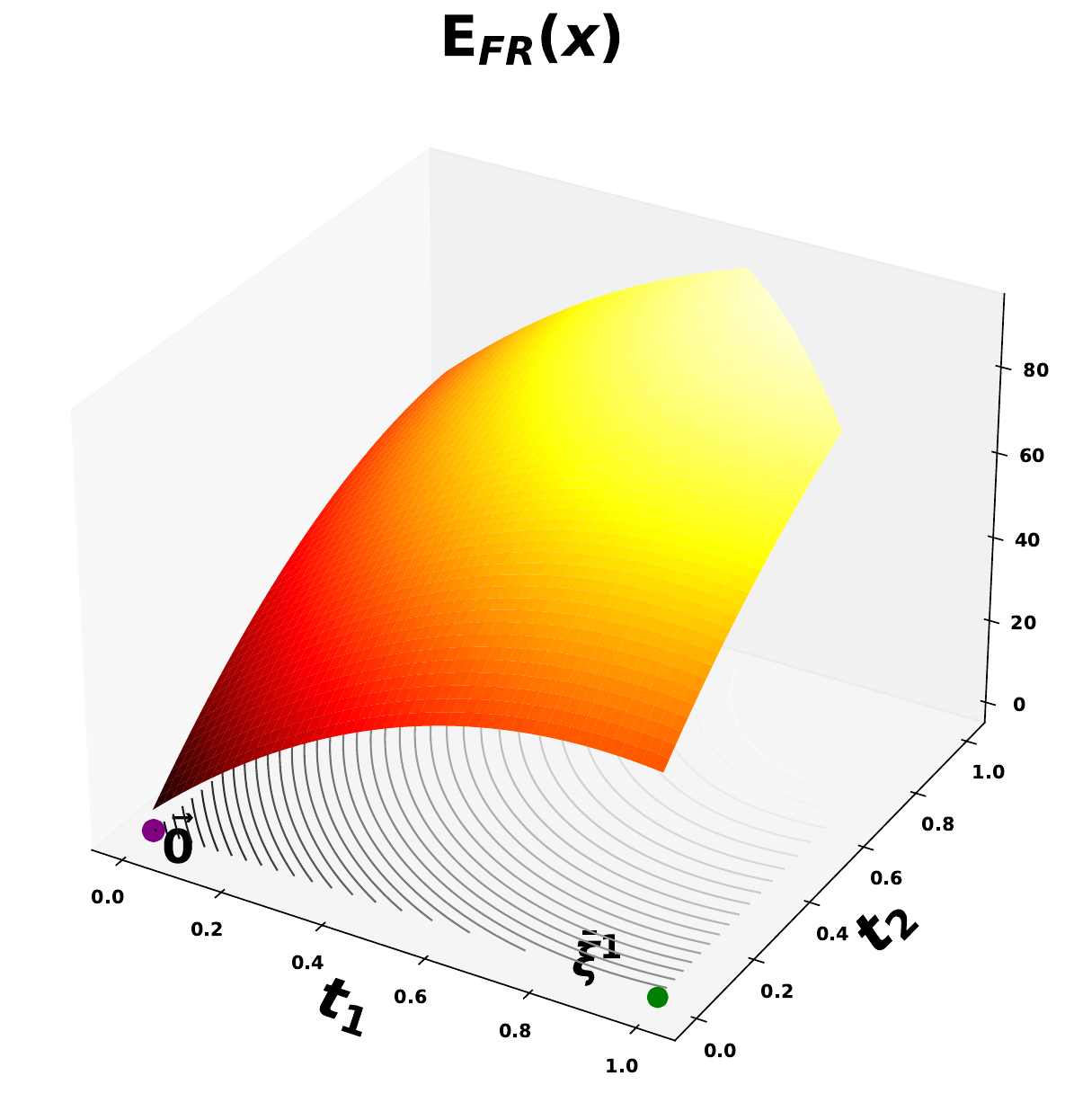}}
    \subfloat[Failed retrieval]{\includegraphics[width=.35\columnwidth]{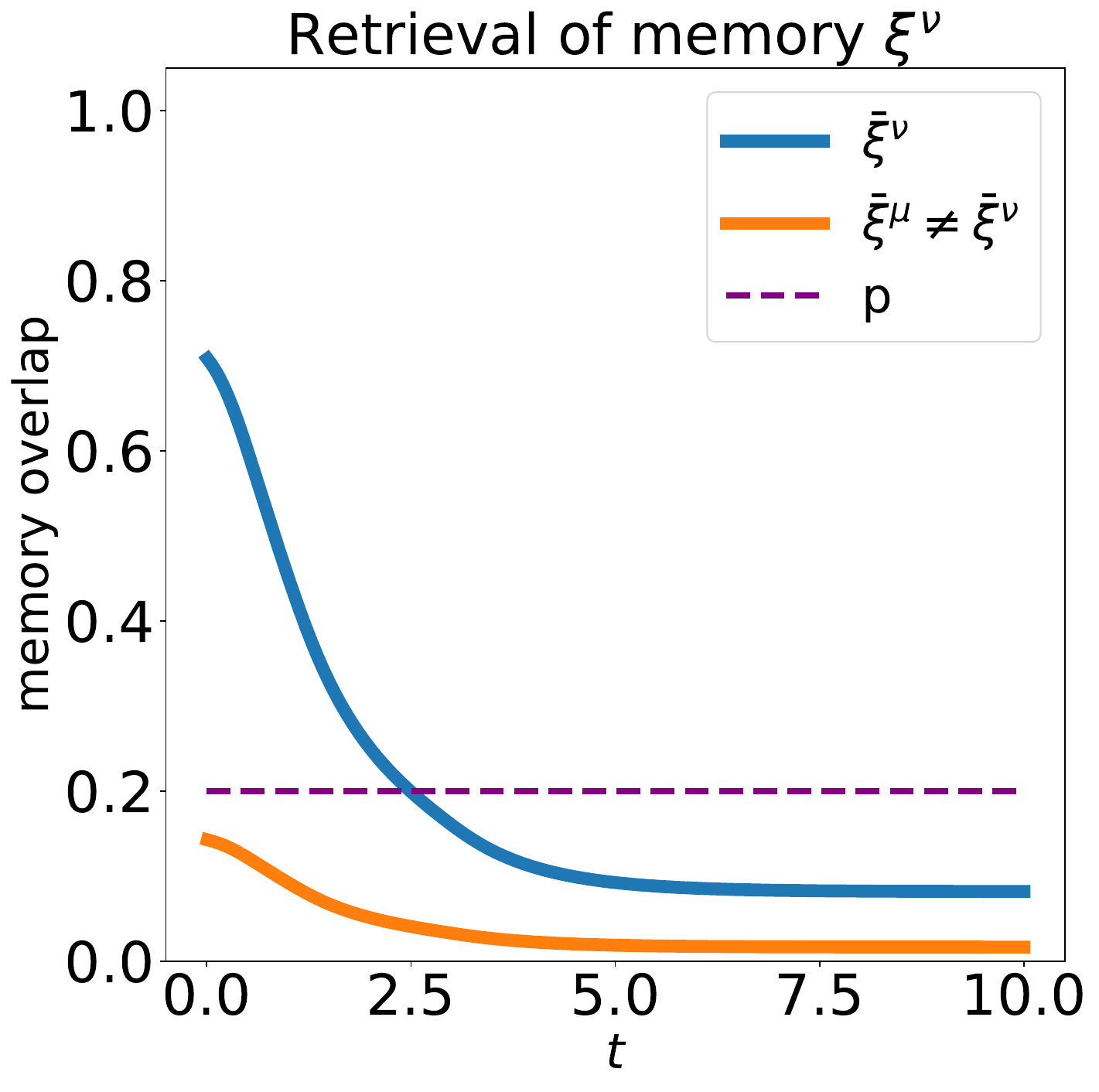}}

    \caption{Qualitative assessment of the design of \emph{firing rate}
      models with a rectified hyperbolic tangent activation function. (a, d) Phase diagram for the stability (or instability) of the \emph{firing rate} model for (a) $\gamma<0$ ($\I_{0}=-0.3$ and $\I_{1}=0.9$) and for (d) $\gamma>0$ ($\I_{0}=0.1$ and $\I_{1}=0.9$). The instability region is wide, and for $\gamma>0$, it consists of two disconnected region. Specifically, sufficiently small offsets lead to instability regardless of the
      slope. This is exactly the case in which $\gamma<\alpha$ and the point $(\I_{0},x_{0})$ is intersected from above, thus precluding stability. (b, e) Energy surface of the \emph{firing rate} model with $\gamma<0$ when the memories are (b) stable and (e) unstable. (c-f) Retrieval performances of the \emph{firing rate} model for $\gamma<0$ when the memories are (c) stable and (f) unstable. Parameters are set to $\rho=4.8$ and $\I^{\star}=0.2$ (stable memories) or $\I^{\star}=0.8$ (unstable memories).}
    \label{fig:PD2}
\end{figure}

\begin{figure}[tbph!]
    \centering
    \subfloat[$\gamma<0$]{\includegraphics[width=.3\columnwidth]{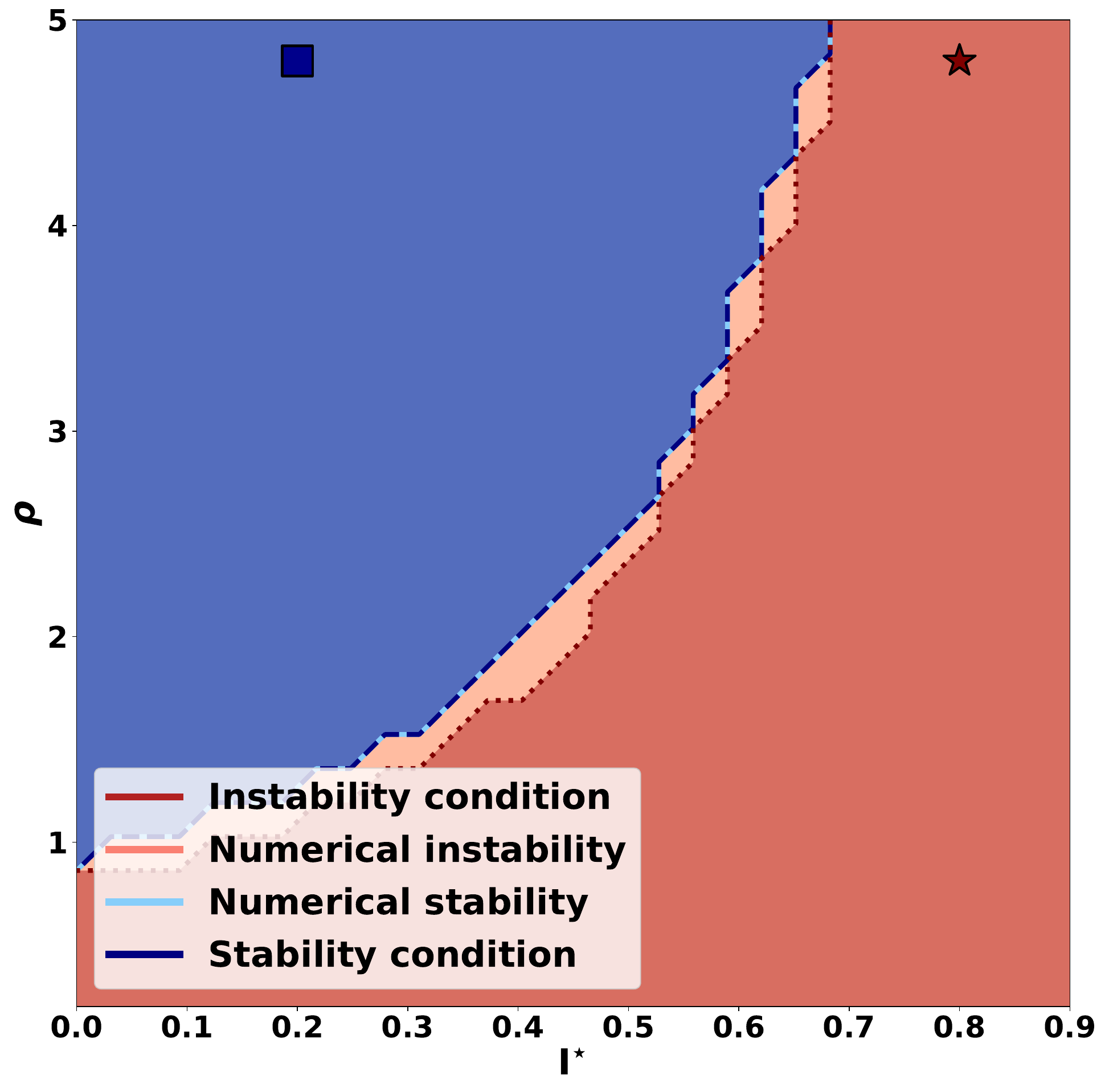}}
    \subfloat[Stable memories]{\includegraphics[width=.3\columnwidth]{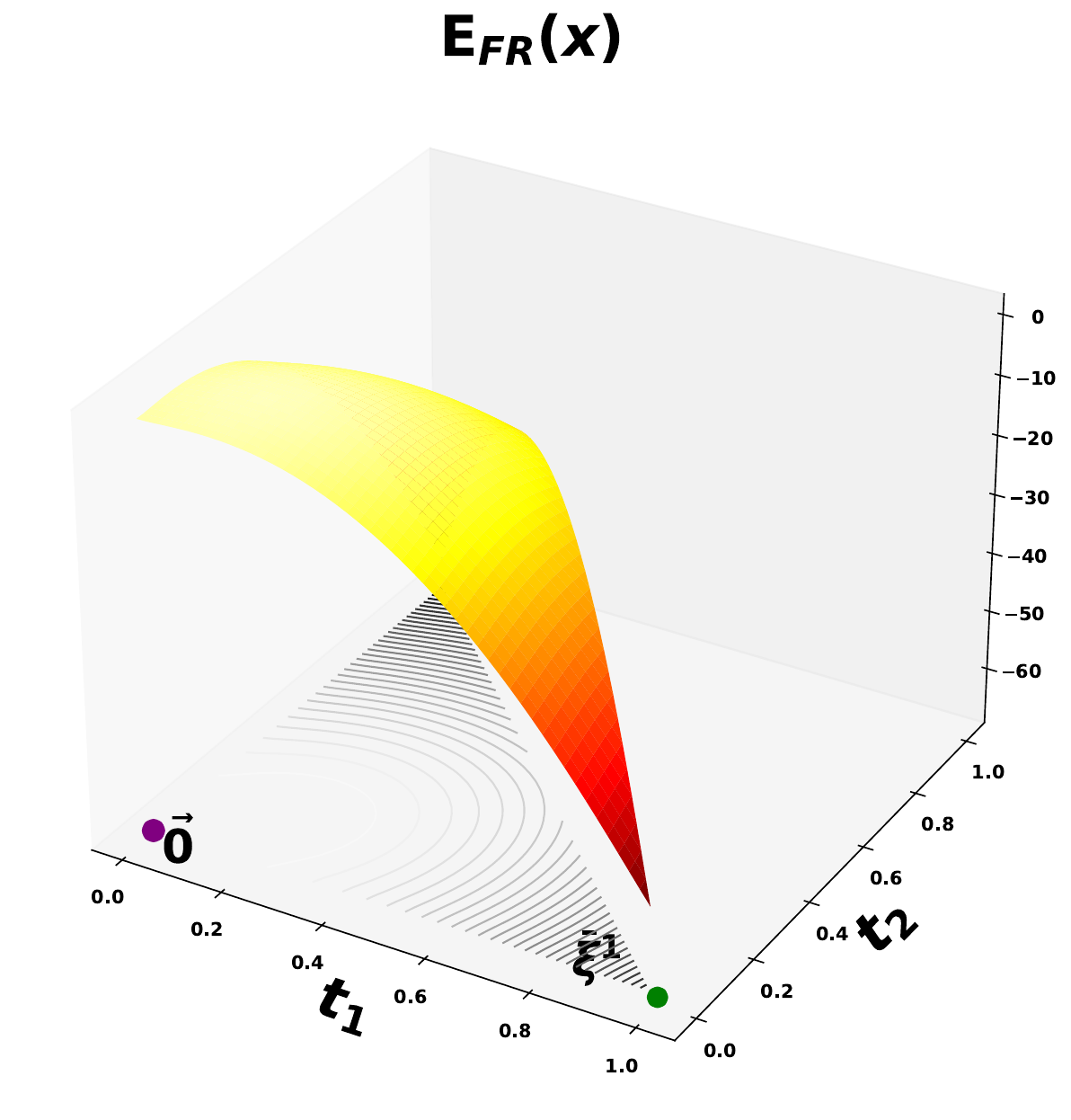}}
    \subfloat[Correct retrieval]{\includegraphics[width=.35\columnwidth]{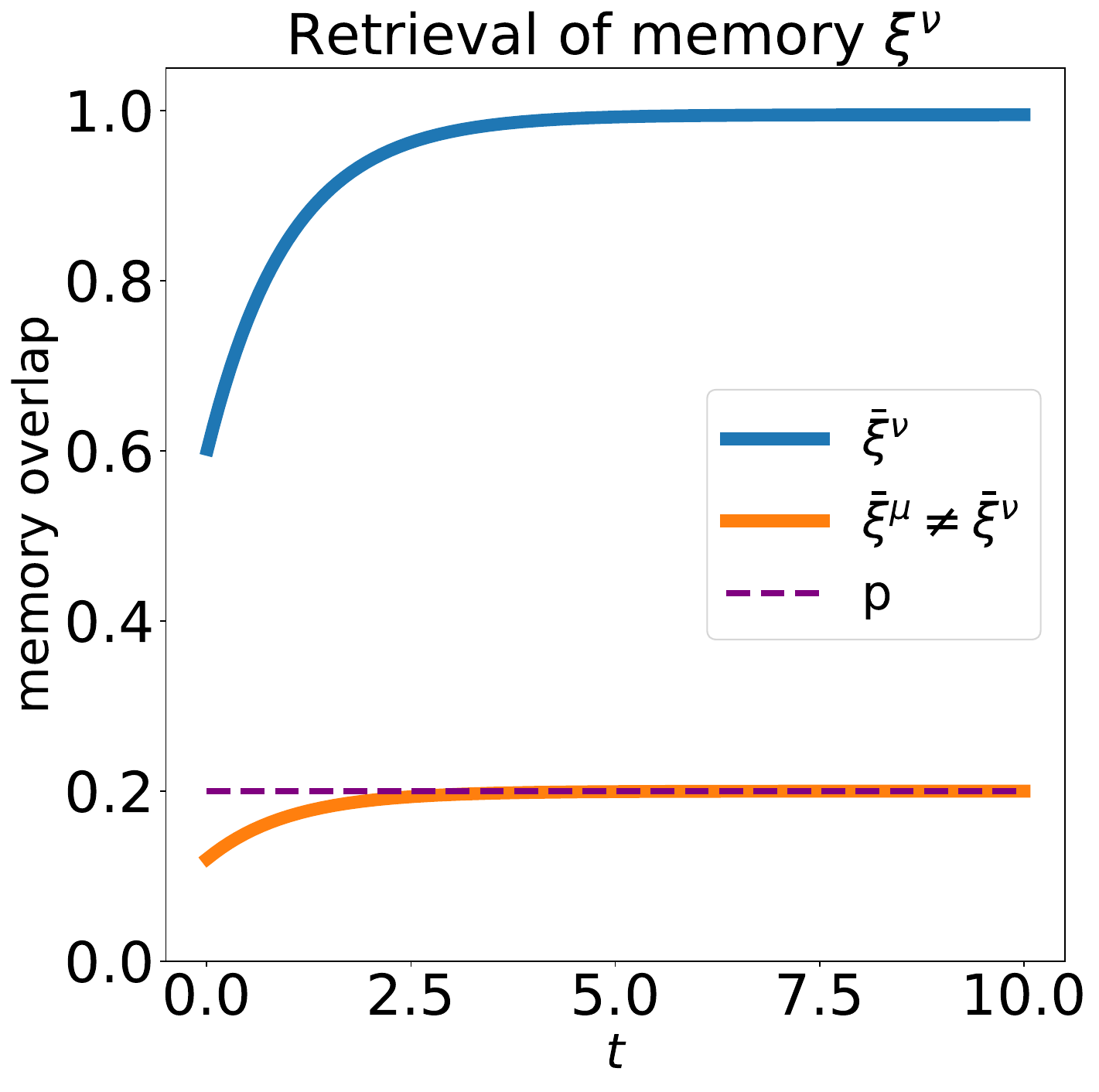}}
    
    \subfloat[$\gamma>0$]{\includegraphics[width=.3\columnwidth]{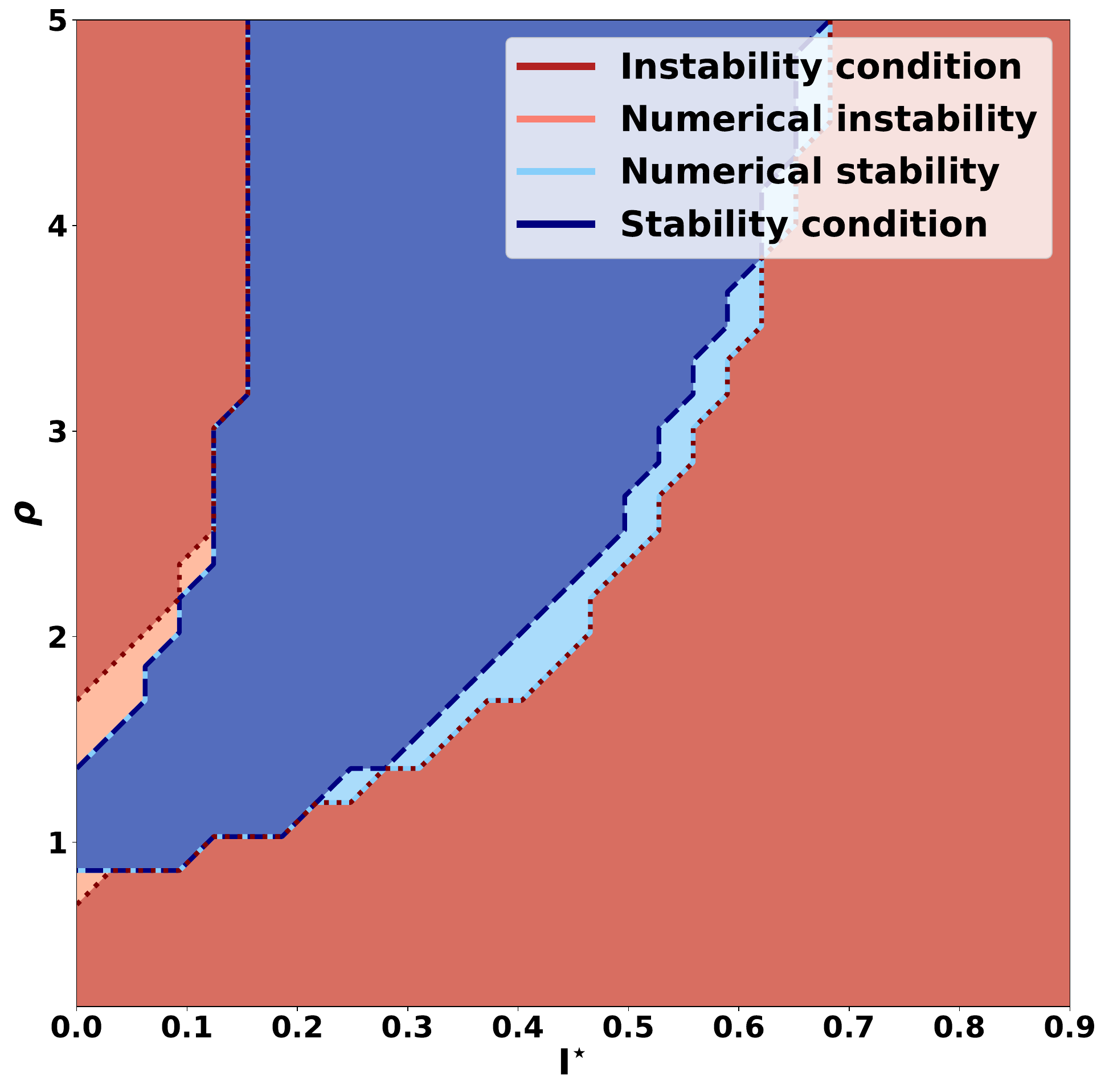}}
    \subfloat[Unstable memories]{\includegraphics[width=.3\columnwidth]{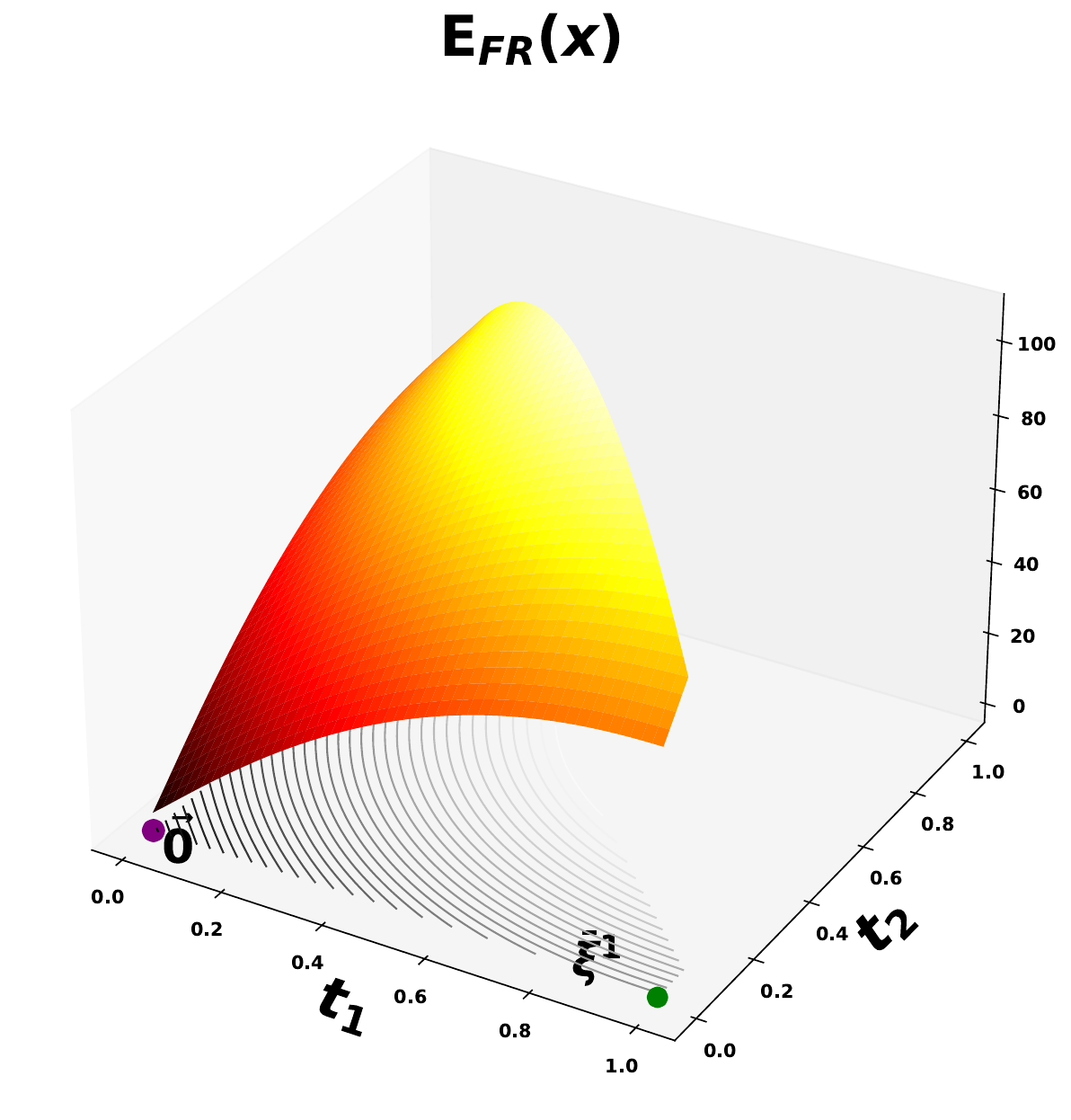}}
    \subfloat[Failed retrieval]{\includegraphics[width=.35\columnwidth]{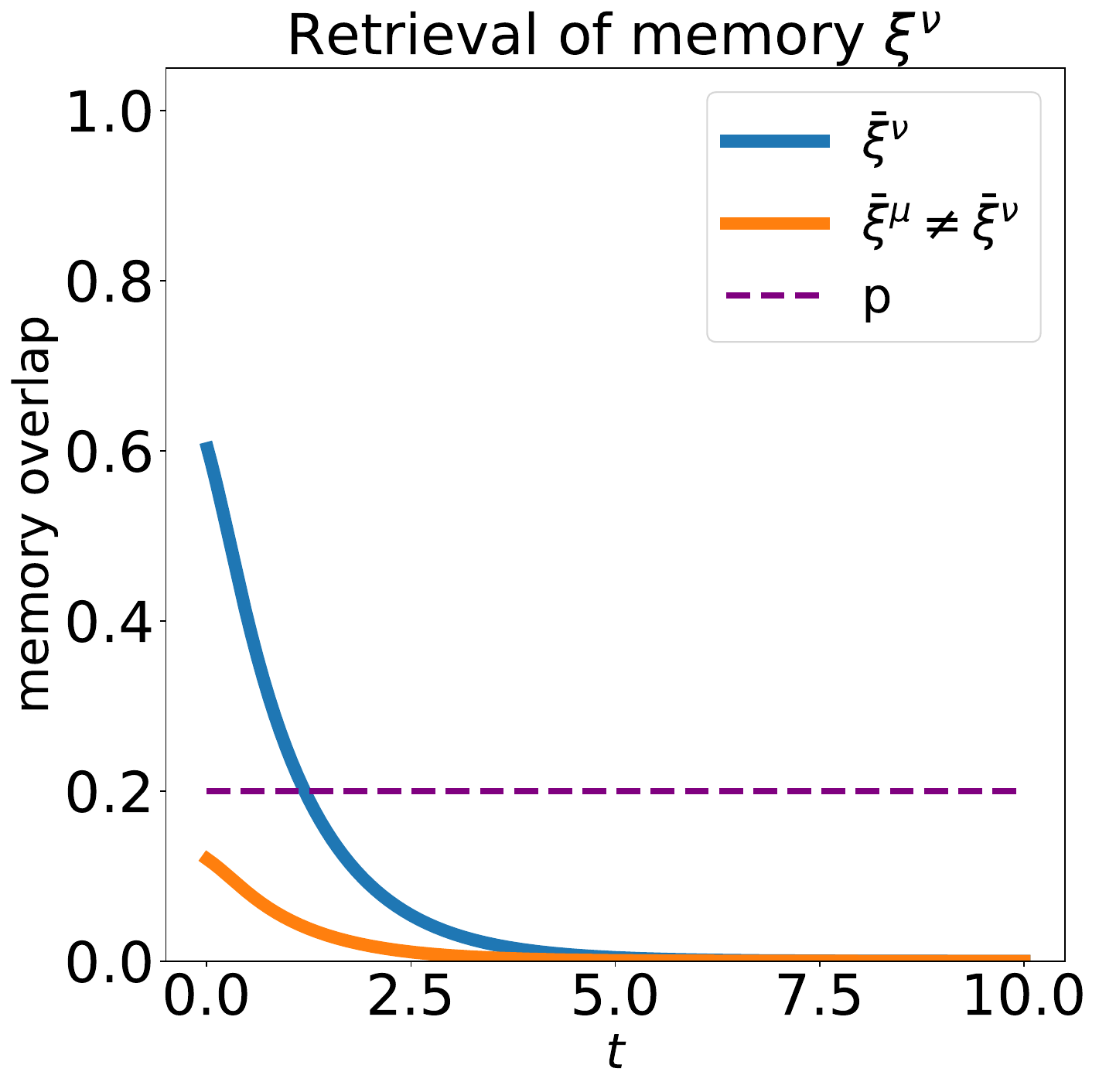}}

    \caption{Qualitative assessment of the design of \emph{firing rate} models with sigmoidal activation function. (a, d) Phase diagram for the stability (or instability) of the \emph{firing rate} model for (a) $\gamma<0$ ($\I_{0}=-0.3$ and $\I_{1}=0.9$) and for (d) $\gamma>0$ ($\I_{0}=0.1$ and $\I_{1}=0.9$). Specifically, for $\gamma>0$ we have two wide, disconnected instability regions, and a narrow stability region, highlighting how excitatory homeostatic strength significantly increase the model sensitivity to parameters choice. (c, f) Energy surface associated to \emph{firing rate} models with parameters lying in the (c) stability region of the phase portrait (f) instability region of the phase portrait. (c) For the choice of parameters $\rho=4.8$ and $\I^{\star}=0.2$, the memories of the \emph{firing rate} system are locally asymptotically stable and local minima for the Energy. (f) For the choice of parameters $\rho=4.8$ and $\I^{\star}=0.8$, the memories are unstable equilibria for the \emph{firing rate} model and saddle points of the associated Energy.}
    \label{fig:PD1}
\end{figure}

\section{Conclusion}

\emph{Firing rate} models provide a biologically
plausible and efficient framework for representing neural activity at the
population level, making them particularly well-suited for capturing key
macroscopic behaviors in neuronal networks, including associative memory
formation. This study introduces a positive \emph{firing rate} model that
encodes memories as neural firing rates, mirroring realistic activity
levels observed in the brain. Through a balanced excitatory-inhibitory
synaptic matrix, we introduced a covariance-based approach that allows
re-scaled memories to emerge as equilibrium points in network dynamics.
This approach enables a reinterpretation of well-known examples from the
literature \citep[Sec.~7.4]{AD:05} and resolves the ``anti-memory''
phenomenon—a drawback inherent to traditional Hopfield model designs—by
stabilizing only intended memories as equilibrium states.  Furthermore,
our analysis of local and global asymptotic stability conditions for
rescaled memories provides insight on the design choices that ensure
effectiveness and robustness of the \emph{firing rate} model as an
associative memory device.  This study applies rigorous mathematical
analysis typically reserved for \emph{voltage} models to \emph{firing
rate} frameworks, bridging a critical gap and enabling a unified
understanding of associative memory dynamics.  These results are easily
amenable to graphical representation, and therefore present both
practitioners and theoreticians with clear intuitive guidance on the
design choices that will result in an effective \emph{firing rate}
associative memory system.  This study employs a deterministic approach
to memory definition, diverging from traditional probabilistic
frameworks.  Future research could explore the model’s capacity in
probabilistic settings, specifically addressing the scalability and
storage limits of \emph{firing rate} associative memories.  In summary,
this work advances our understanding of the fundamental properties of
\emph{firing rate} networks and their application to associative memory,
providing both a theoretical contribution and practical insights for
researchers developing biologically plausible associative memory
networks.

\section*{Appendix}

\subsection*{On generalized prototypical memories and equilibria assignment}

We start from the proof of Theorem \ref{thm:W}.
We will prove this theorem in a more general setting than the one presented in Sections~3.
Indeed, we consider a set of prototypical memories that are $\{0,1\}$-valued vectors $\{\xi^{\mu}\}_{\mu=1}^P$  and we assume they satisfy the following conditions
\begin{subequations}\label{eq:memories-G}
\begin{align}
    &\text{(equal sparsity):}\qquad \vectorones[n]^\top \xi^{\mu}=p n , \ \ \ \mu=1,\dots,P, \label{eq:memories-b-G}\\
    &\text{(equal correlation):}\qquad {\xi^{\mu}}^\top \xi^{\nu} = p r n, \ \ \ \forall\, \label{eq:memories-c-G} \mu\ne \nu,
\end{align}
\end{subequations}
where $p\in (0,1)$ and $r\in [0,1)$.  Note that we recover the properties
  in equation~\eqref{eq:memories} by setting $r=p$.
\medskip

We now generalize the construction of the covariance-based synaptic matrix to the case of the new prototypical and associated retrievable memories. The new synaptic weight matrix is
\begin{align}\label{eq:W-genr}
    W = \!\frac{\alpha}{p(1-r) n}\! \sum_{\mu=1}^P (\xi^{\mu}\! -\! \beta \vectorones[n])(\xi^{\mu}\! -\! \beta \vectorones[n])^\top\!\!\! +\! \frac{\gamma}{n}\vectorones[n]\vectorones[n]^\top,
\end{align}
where $\alpha$, $\beta$, $\gamma\in \real$ are defined as
\begin{subequations}\label{eq:parameters-G}
    \begin{align}
    &\alpha:=\frac{\I_{1}-\I_{0}}{x_{1}-x_{0}},\label{eq:alpha-G}\\
    & \beta:=p\frac{r x_{1}+(1-r)x_{0}}{p x_{1}+(1-p)x_{0}},\label{eq:beta-G} \\
    &\gamma:= \frac{\beta \I_{1}+(1-\beta)\I_{0}}{px_{1}+(1-p)x_{0}}. \label{eq:gamma-G}
    \end{align}
\end{subequations}
Consequently, we are ready to generalize the theorem on the equilibria assignment as follows.
\begin{theorem}[Equilibria assignment through $W$]\label{thm:Wgen}
Consider a set of vectors $\{\xi^{\mu}\}_{\mu=1}^P$ satisfying \eqref{eq:memories}. Let $\I_{0},\I_{1}\in\real$, with $\I_{0}< \I_{1}$ and $x_{0}:=\phi(\I_{0})$ and $x_{1}:=\phi(\I_{1})$, and $W$ as in \eqref{eq:W-genr}, 
Then the vectors $\{\bar \xi^{\mu}\}_{\mu=1}^P$ defined in \eqref{eq:RM-components} are equilibria of \eqref{eq:FR}. \smallskip
\end{theorem}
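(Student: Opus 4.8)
The plan is to verify the fixed-point equation by direct computation. Since $\Phi$ is diagonal and homogeneous with $\Phi(x)_i=\phi(x_i)$, and every entry of the retrievable memory $\bar\xi^\mu=(x_1-x_0)\xi^\mu+x_0\vectorones[n]$ equals either $x_0=\phi(\I_0)$ or $x_1=\phi(\I_1)$, the equilibrium condition $-\bar\xi^\mu+\Phi(W\bar\xi^\mu)=0$ will follow once I show that the \emph{input current vector} satisfies
\begin{equation}\label{eq:target}
W\bar\xi^\mu = (\I_1-\I_0)\xi^\mu + \I_0\vectorones[n].
\end{equation}
Indeed, then $(W\bar\xi^\mu)_i$ equals $\I_1$ wherever $\xi^\mu_i=1$ and $\I_0$ wherever $\xi^\mu_i=0$, so applying $\phi$ entrywise returns precisely $\bar\xi^\mu$.

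First I would split $W$ in \eqref{eq:W-genr} into its covariance part and its homeostatic rank-one part, and evaluate the family of scalar products $(\xi^\nu-\beta\vectorones[n])^\top\bar\xi^\mu$ using the sparsity and correlation identities $\vectorones[n]^\top\xi^\mu=pn$, $(\xi^\mu)^\top\xi^\mu=pn$, and $(\xi^\nu)^\top\xi^\mu=prn$ for $\nu\ne\mu$. Writing $x_p:=px_1+(1-p)x_0$ and $x_r:=rx_1+(1-r)x_0$, these products collapse to $pn(x_1-x_r)=pn(1-r)(x_1-x_0)$ for $\nu=\mu$, and to $pn\,x_r-\beta n\,x_p$ for $\nu\ne\mu$.

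The crux is that the definition of $\beta$ in \eqref{eq:beta-G}, namely $\beta=p\,x_r/x_p$, is engineered so that $\beta n x_p=pn x_r$ and the off-diagonal products vanish; hence only the $\nu=\mu$ term survives in the covariance sum. Pushing $pn(1-r)(x_1-x_0)$ back through the prefactor $\alpha/(p(1-r)n)$ and using $\alpha=(\I_1-\I_0)/(x_1-x_0)$ from \eqref{eq:alpha-G} reduces the covariance contribution to $(\I_1-\I_0)(\xi^\mu-\beta\vectorones[n])$. In parallel, the homeostatic term yields $\gamma x_p\vectorones[n]$, and substituting $\gamma=(\beta\I_1+(1-\beta)\I_0)/x_p$ from \eqref{eq:gamma-G} turns this into $(\beta\I_1+(1-\beta)\I_0)\vectorones[n]$. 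Summing the two contributions, the $\beta$-proportional multiples of $\vectorones[n]$ cancel, leaving exactly $(\I_1-\I_0)\xi^\mu+\I_0\vectorones[n]$, which is \eqref{eq:target}.

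The argument is essentially bookkeeping, so the main obstacle is organizational rather than conceptual: one must keep the diagonal ($\nu=\mu$) and off-diagonal ($\nu\ne\mu$) contributions cleanly separated and recognize that the three tuned parameters play distinct roles — $\beta$ decouples the memories from one another (forcing the off-diagonal cancellation), $\alpha$ calibrates the excitation gap $\I_1-\I_0$, and $\gamma$ fixes the baseline so inactive units land exactly at $\I_0$. As a consistency check, setting $r=p$ gives $\beta=p$ and recovers the matrix \eqref{eq:W-map} together with the parameters $\alpha,\gamma$ of Theorem~\ref{thm:W}, so Theorem~\ref{thm:Wgen} subsumes the main-text statement.
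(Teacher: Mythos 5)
Your proposal is correct and follows the same route as the paper's proof: establish the key identity $W\bar\xi^{\mu}=(\I_{1}-\I_{0})\xi^{\mu}+\I_{0}\vectorones[n]$ and then apply $\Phi$ entrywise, using that the entries of this vector lie in $\{\I_{0},\I_{1}\}$ and that $x_{0}=\phi(\I_{0})$, $x_{1}=\phi(\I_{1})$. The only difference is that you spell out the inner-product bookkeeping (the role of $\beta$ in killing the $\nu\neq\mu$ terms and the cancellation of the $\beta$-proportional multiples of $\vectorones[n]$) that the paper compresses into ``lengthy but straightforward calculations.''
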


\begin{proof}
Using the definitions of $\bar\xi^{\nu}$ and $W$ in \eqref{eq:W-map} and \eqref{eq:RM-components}, it is a matter of lengthy but straightforward calculations to check that the following identity holds
\begin{align*}
    W\bar\xi^{\nu} &=  \frac{\alpha}{p(1-r) n}\left(\xi^{\nu} - \beta \vectorones[n]\right)\left(\xi^{\nu} - \beta \vectorones[n]\right)^\top(x_{0} \vectorones[n] + (x_{1}-x_{0})\xi^{\nu}) \\[0.175cm] 
    & +\frac{\alpha}{p(1-r) n}\sum_{\mu\ne \nu}\left(\xi^{\mu} - \beta\vectorones[n]\right)\left(\xi^{\mu} - \beta\vectorones[n]\right)^\top(x_{0} \vectorones[n] + (x_{1}-x_{0})\xi^{\nu})\\
    &+\frac{\gamma}{ n} \vectorones[n]\vectorones[n]^\top (x_{0} \vectorones[n] + (x_{1}-x_{0})\xi^{\nu})\\[0.175cm]
    & = (\I_{1}-\I_{0})\xi^{\nu}+\I_{0}\vectorones[n].
\end{align*}
From the above identity, it follows that
\begin{align}
\Phi(W\bar\xi^{\nu}) &= \Phi((\I_{1}-\I_{0})\xi^{\nu}+\I_{0}\vectorones[n]) \notag\\
& = (x_{1}-x_{0})\xi^{\nu}+x_{0}\vectorones[n] = \bar \xi^{\nu},\label{eq:eq-condition}
\end{align}
where we used the fact that the entries of the vector $(\I_{1}-\I_{0})\xi^{\nu}+\I_{0}\vectorones[n]$ takes values in $\{\I_{0},\I_{1}\}$ and the identities $x_{0}=\phi(\I_{0})$, $x_{1}=\phi(\I_{1})$.
To conclude, observe that equation \eqref{eq:eq-condition} implies that the vectors $\{\bar \xi^{\mu}\}_{\mu=1}^P$ are equilibria of~\eqref{eq:FR}.
\end{proof}

\subsection*{On the linear independence of the memories}

From the geometric constraints placed on the prototypical memories, namely the equal sparsity and equal correlation constraints, the following proposition follows. 

\begin{proposition}[Linear Independence of the Memories]\label{prop: LI-mem}
Let $p,r\in(0,1)$ and  let the prototypical memories $\{\xi^{\mu}\}_{\mu=1}^{P}$ satisfy the equal sparsity (\ref{eq:memories-b-G}) and the equal correlation (\ref{eq:memories-c-G}) constraints. Then
\begin{itemize}
    \item[(i)] The prototypical memories $\{\xi^{\mu}\}_{\mu=1}^{P}$ are linearly independent.
    \item[(ii)] The retrievable memories $\{\bar\xi^{\mu}\}_{\mu=1}^{P}$ are linearly independent.
\end{itemize}    
\end{proposition}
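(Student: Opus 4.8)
The natural strategy for (i) is to encode linear independence as positive definiteness of the Gram matrix: for vectors $v_1,\dots,v_P$ with Gram matrix $G_{\mu\nu}=v_\mu^\top v_\nu$ one has $c^\top G c=\bigl\|\sum_\mu c_\mu v_\mu\bigr\|^2$, so that $G\succ0$ is equivalent to linear independence of the $v_\mu$. I would therefore build $G\in\real^{P\times P}$ directly from the two constraints. The equal sparsity condition \eqref{eq:memories-b-G} gives the diagonal entries $G_{\mu\mu}=\|\xi^{\mu}\|^2=pn$ (a $\{0,1\}$-valued vector has squared Euclidean norm equal to its number of nonzero entries), while the equal correlation condition \eqref{eq:memories-c-G} gives $G_{\mu\nu}=prn$ for $\mu\neq\nu$. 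Hence
\begin{equation*}
  G = pn(1-r)\,I_P + prn\,\vectorones[P]\vectorones[P]^\top .
\end{equation*}
Since $\vectorones[P]\vectorones[P]^\top$ has eigenvalues $P$ (simple) and $0$, the matrix $G$ has eigenvalues $pn(1-r)$ with multiplicity $P-1$ and $pn\bigl[(1-r)+Pr\bigr]$ once; both are strictly positive because $p,n>0$ and $r\in[0,1)$. Thus $G\succ0$, which proves (i).

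For (ii) I would reduce directly to (i) rather than recompute a Gram matrix. Suppose $\sum_{\mu}c_\mu\bar\xi^{\mu}=0$; substituting $\bar\xi^{\mu}=(x_{1}-x_{0})\xi^{\mu}+x_{0}\vectorones[n]$ yields
\begin{equation*}
  (x_{1}-x_{0})\sum_{\mu}c_\mu\,\xi^{\mu} + x_{0}\Bigl(\sum_{\mu}c_\mu\Bigr)\vectorones[n] = 0 .
\end{equation*}
Left-multiplying by $\vectorones[n]^\top$ and using $\vectorones[n]^\top\xi^{\mu}=pn$ and $\vectorones[n]^\top\vectorones[n]=n$ collapses this to $\bigl[p x_{1}+(1-p)x_{0}\bigr]\bigl(\sum_{\mu}c_\mu\bigr)=0$. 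The scalar $p x_{1}+(1-p)x_{0}$ is strictly positive, so $\sum_{\mu}c_\mu=0$; feeding this back into the displayed identity leaves $(x_{1}-x_{0})\sum_{\mu}c_\mu\,\xi^{\mu}=0$, and since $x_{1}>x_{0}$ we obtain $\sum_{\mu}c_\mu\,\xi^{\mu}=0$. Part (i) then forces every $c_\mu=0$, which proves (ii).

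The only delicate points---and where I expect the genuine verification to lie---are the two positivity facts that make the argument run: $r<1$, which keeps the multiplicity-$(P-1)$ eigenvalue of $G$ positive (i.e.\ excessive pairwise overlap does not collapse the memories into a degenerate configuration), and $p x_{1}+(1-p)x_{0}>0$ for part (ii). The latter follows from the non-negativity of the firing rates, $x_{0}=\phi(\I_{0})\ge0$ under Assumption~\ref{assum:activation-function}, together with $x_{1}>x_{0}\ge0$, which forces $x_{1}>0$ and hence $p x_{1}>0$. Everything else is bookkeeping; in particular, neither claim uses the finer structure of the synaptic matrix $W$ or of the activation function beyond its non-negativity and the ordering $x_{0}<x_{1}$.
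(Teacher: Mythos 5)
Your proof is correct, but it runs in the opposite direction from the paper's and uses a different core tool. The paper proves (ii) \emph{first}, by contradiction: assuming $\sum_{\mu}\beta_{\mu}\bar\xi^{\mu}=\vectorzeros$ with some $\beta_{\nu}\neq 0$, it splits into the case $\sum_{\mu}\beta_{\mu}\neq 0$, where the equal sparsity constraint (i.e.\ multiplication by $\vectorones[n]^{\top}$) yields the contradiction $1=-\tfrac{1}{p}\tfrac{x_{0}}{x_{1}-x_{0}}$, and the case $\sum_{\mu}\beta_{\mu}=0$, where multiplying $\sum_{\mu\neq\nu}\beta_{\mu}(\xi^{\mu}-\xi^{\nu})=\vectorzeros$ by ${\xi^{\nu}}^{\top}$ and invoking the equal correlation constraint forces $\beta_{\nu}=0$; statement (i) is then obtained as the special case $x_{1}=1$, $x_{0}=0$ of (ii). You instead prove (i) first, via the Gram matrix $G=pn(1-r)I_{P}+prn\,\vectorones[P]\vectorones[P]^{\top}$, whose eigenvalues $pn(1-r)$ (multiplicity $P-1$) and $pn[(1-r)+Pr]$ are strictly positive, and then reduce (ii) to (i) by projecting onto $\vectorones[n]$ to kill the coefficient sum. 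Both arguments rest on the same two positivity facts, namely $r<1$ and $x_{0}\geq 0$ with $x_{1}>x_{0}$ (the latter is implicit in the paper's step $x_{0}/(x_{1}-x_{0})\geq 0$, and you correctly trace it to the non-negativity of $\phi$ in Assumption~\ref{assum:activation-function}). What your route buys: the spectral computation makes the role of $r<1$ completely transparent---it is exactly the positivity of the multiplicity-$(P-1)$ eigenvalue---it quantifies the distance from degeneracy via the smallest eigenvalue $pn(1-r)$, and it avoids the paper's two-case analysis. What the paper's route buys: it is entirely elementary (no spectral argument), and once (ii) is established, (i) follows as a one-line specialization rather than requiring a separate reduction.
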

\begin{proof}
  We start by proving the second statement.
  \begin{itemize}    
  \item[(ii)] Suppose that $x_{1}\neq{x_{0}}$ and that the retrievable memories $\bar\xi^{\mu}=(x_{1}-x_{0})\xi^{\mu}+x_{0}\vectorones[n]$ for $\mu=1,\dots,P$ are linearly dependent, so that there exists $\beta=(\beta_{1},\dots,\beta_{P})\in\real^{P}/\{\vectorzeros\}$ such that
        \begin{equation}
            \vectorzeros=\sum_{\mu=1}^{P}\beta_{\mu}\bar\xi^{\mu}
        \end{equation} 
        Without loss of generalization suppose that $\beta_{\nu}\neq{0}$, from which
        \begin{equation}
            \xi^{\nu}=-\sum_{\mu\neq\nu}\frac{\beta_{\mu}}{\beta_{\nu}}\xi^{\mu}-\frac{x_{0}}{x_{1}-x_{0}}\vectorones[n]\left(1+\sum_{\mu\neq\nu}\frac{\beta_{\mu}}{\beta_{\nu}}\right)
        \end{equation}
        Exploiting now the equal sparsity constraint (\ref{eq:memories-b-G}) we obtain that
        \begin{align}
            pn&=\vectorones[n]^{\top}\xi^{\nu}\nonumber\\
            &=-pn\sum_{\mu\neq\nu}\frac{\beta_{\mu}}{\beta_{\nu}}-pn\frac{x_{0}}{x_{1}-x_{0}}\frac{1}{p}\left(1+\sum_{\mu\neq\nu}\frac{\beta_{\mu}}{\beta_{\nu}}\right)
        \end{align}
        which implies that
        \begin{equation}
            \left(1+\sum_{\mu\neq \nu}\frac{\beta_{\mu}}{\beta_{\nu}}\right)pn=-pn\frac{x_{0}}{x_{1}-x_{0}}\frac{1}{p}\left(1+\sum_{\mu\neq\nu}\frac{\beta_{\mu}}{\beta_{\nu}}\right)
        \end{equation}
        If $\sum_{\mu}^{P}\beta_{\mu}\neq0$, then simplifying we derive the following contradiction
        \begin{equation}
            1=-\frac{1}{p}\frac{x_{0}}{x_{1}-x_{0}}
        \end{equation}
        since by hypothesis $p>0$ and $x_{0}/(x_{1}-x_{0})\geq{0}$. Then we are only left to verify that the hypothesis $\sum_{\mu}^{P}\beta_{\mu}=0$ leads to another contradiction. Using again the hypothesis $\beta_{\nu}\neq 0$ and expressing $\beta_{\nu}=-\sum_{\mu\neq \nu}\beta_{\mu}$, we write the condition of linear dependence as
        \begin{align}
            \vectorzeros&=\sum_{\mu\neq \nu}\beta_{\mu}(\bar \xi^{\mu}-\bar \xi^{\nu})\nonumber\\
                        &=\sum_{\mu\neq \nu}\beta_{\mu}(x_{1}-x_{0})(\xi^{\mu}-\xi^{\nu})
        \end{align}
        from which it is immediate to reach the contradiction
        \begin{align}
            0 &= \vectorzeros^{\top}\xi^{\nu}\nonumber\\
              &= (x_{1}-x_{0})\sum_{\mu\neq \nu}\beta_{\mu}(\xi^{\mu}-\xi^{\nu})^{\top}\xi^{\nu}\nonumber\\
              &= pn\underbrace{(x_{1}-x_{0})}_{\neq{0}}\underbrace{(1-r)}_{\neq{0}}\underbrace{\left(-\sum\nolimits_{\mu\neq \nu}\beta_{\mu}\right)}_{\beta_{\nu}}
        \end{align}
        \item[(i)] To see that also the prototypical memories are linearly independent, it suffices to take $x_{1}=1$ and $x_{0}=0$. Then this reduces to verify that $\sum_{\mu}^{P}\beta_{\mu}=0$, which we have already proven leads to a contradiction.
        \end{itemize}
\end{proof}

\subsection*{When the antimemories are equilibria} 

We show now that antimemories are equilibria if and only if condition \eqref{eq:condantim} holds true. Here for simplicity we assume that $r=p$. First notice that $\bar\xi+\bar\xi^\text{ant}=(x_{0}+x_{1})\vectorones[n]$ so that $\bar\xi^\text{ant}=(x_{0}+x_{1})\vectorones[n]-\bar\xi$ and hence
$W\bar\xi^\text{ant}=(x_{0}+x_{1})W\vectorones[n]-W\bar\xi$. Since we know that $W\vectorones[n]=\gamma\vectorones[n]$ and $W\bar\xi=(\I_{1}-\I_{0})\xi+\I_{0}\vectorones[n]$, we can argue that $W\bar\xi^\text{ant}=((x_{0}+x_{1})\gamma-\I_{0})\vectorones[n]-(\I_{1}-\I_{0})\xi$. 

Now if $i$ is such that $\xi_i=0$ and $\xi^\text{ant}_i=1$, then, after some computations we see that
$$(W\bar\xi^\text{ant})_i=\frac{p\I_{1}x_{1}+p\I_{1}x_{0}+(1-2p)\I_{0}x_{1}}{px_{1}+(1-p)x_{0}}$$
In order $\bar\xi^\text{ant}$ to be an equilibrium point we need to impose that $\phi((W\bar\xi^\text{ant})_i)=\bar\xi^\text{ant}_i=x_{1}$ that holds if $(W\bar\xi^\text{ant})_i=\I_{1}$ which is equivalent to
$$\frac{p\I_{1}x_{1}+p\I_{1}x_{0}+(1-2p)\I_{0}x_{1}}{px_{1}+(1-p)x_{0}}=\I_{1}$$
After some simple computations we see that this is equivalent to the condition
\begin{equation}\label{eq:antimemcond}
  (1-2p)(\I_{0}x_{1}-\I_{1}x_{0})=0  
\end{equation}
Notice that this holds if and only if $p=1/2$ or $\I_{0}x_{1}=\I_{1}x_{0}$. If we start from the hypothesis that $i$ is such that $\xi_i=1$ and $\xi^\text{ant}_i=0$, then, after some computations we see that the equilibrium point condition holds again if and only if \eqref{eq:antimemcond} holds true. We can argue that this condition is necessary and sufficient for having that all the antimemories provide equilibrium points for \eqref{eq:FR}. Notice that, while this condition holds true for the Hopfield models, it generically fails for general \emph{firing rate} models when $p\not=1/2$. 

\subsection*{Memories not used in the construction of $W$}
Given a set of prototypical memories $\{\xi^{\mu}\}_{\mu=1}^{P}$, consider a vector $\xi\in\real^n$ different from each $\xi^{\mu}$ and with the property that the augmented set $\{\xi, \xi^{\mu}\}_{\mu=1}^{P}$ satisfies the equal sparsity and equal correlation Assumption~\ref{assum:memories}.  Define the covariance-based synaptic matrix $W$ as a function of the prototypical memories $\{\xi^{\mu}\}_{\mu=1}^{P}$ and the rescaled vector $\bar\xi=(x_{1}-x_{0})\xi+x_{0}\vectorones[n]$. It is straightforward to verify that $W\bar\xi = [p{\I_{1}}+(1-p)\I_{0}]\vectorones[n]$ from which it follows that $\Phi(W\bar\xi)=\phi(p{\I_{1}}+(1-p)\I_{0})\vectorones[n]\neq{\bar\xi}$. Hence, spurious equilibria such as $\xi$ cannot exist.

\subsection*{Existence of homogeneous equilibria}

We prove now Lemma \ref{lm:hom} on the existence of homogeneous equilibria for the system \ref{eq:FR}. We assume that $r=p$.

\begin{proof}[Proof of Lemma \ref{lm:hom}]
(i) Let $f(z):=\phi(z)-\gamma^{-1} z$ that is continuous. Using the definition of $\gamma$ in case $r=p$ it can be seen that
$$f(\I_{0})f(\I_{1})=-p(1-p)\frac{(\I_{0}x_{1}-\I_{1}x_{0})^2}{(p \I_{1}+(1-p)\I_{0})^2}\le 0$$
Then $f(\I_{0}),f(\I_{1})$ have opposite signs and by continuity of $f$ there must be $z\in[\I_{0},\I_{1}]$ such that $f(z)=0$ and hence such $z$ satisfies equation \eqref{eq:puntfiss}.

(ii)   Assume now that $z$ satisfies equation \eqref{eq:puntfiss} and let $\bar x:=\gamma^{-1}z\vectorones[n]$. Since when $r=p$ we have that $W\vectorones[n]=\gamma \vectorones[n]$, then $W\bar x=z\vectorones[n]$. Hence $\Phi(W\bar x)=\Phi(z\vectorones[n])=\phi(z)\vectorones[n]=\gamma^{-1}z\vectorones[n]=\bar x$ proving in this way that $\bar x$ is an equilibrium point.
\end{proof}

\subsection*{The local stability result}

We prove here Theorem \ref{thm:stab} in the general case in which $r\not= p$.

\begin{theorem}[Local stability condition for generalized prototypical memories]\label{thm:stab-gen}
    Consider a set of vectors $\{\xi^{\mu}\}_{\mu=1}^P$ satisfying \eqref{eq:memories} and let $W$ be as in Theorem \ref{thm:Wgen}. Assume that $\phi(\cdot)$ in \eqref{eq:FR} satisfies Assumption 1 and let 
    \begin{align}
        &b:=\frac{\I_{0}x_{1}-\I_{1}x_{0}}{x_{1}-x_{0}}\\
        &c:=\alpha\frac{p-r}{1-p}\frac{1}{px_{1}+(1-p)x_{0}}\\
        &d:=\alpha\frac{p-r}{1-p}\frac{\beta x_{1}+(1-\beta)x_{0}}{px_{1}+(1-p)x_{0}}\\
        &\eta:=\max\{\phi'(\I_{0}),\phi'(\I_{1})\}
    \end{align}
    Let $z_1,z_2$ be the roots of the polynomial 
    $$\mathcal{P}(z):=(z-\gamma)(z-\alpha)+P(dz-\alpha d-bc)$$
 \begin{itemize}
\item[(i)]   If $z_1,z_2$ are real then
    \begin{equation}\label{eq:stability-new}
    \eta\max\{\alpha,z_1,z_2\}<1
     \end{equation}
    implies that the equilibria $\{\bar{\xi}^{\mu}\}_{\mu=1}^P$ of \eqref{eq:FR} are locally asymptotically stable.
\item[(ii)]    If $z_1,z_2$ are are not real then
    \begin{equation}\label{eq:stability-newalt}
    \eta\alpha<1
     \end{equation}
    implies that the equilibria $\{\bar{\xi}^{\mu}\}_{\mu=1}^P$ of \eqref{eq:FR} are locally asymptotically stable.
 \end{itemize}   
\end{theorem}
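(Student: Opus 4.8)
The plan is to run Lyapunov's indirect method, so that everything reduces to locating the spectrum of the Jacobian $J(\bar\xi^\nu)=-I+\diag(\Phi'(W\bar\xi^\nu))\,W$ evaluated at a fixed retrievable memory $\bar\xi^\nu$. First I would identify the diagonal matrix $D:=\diag(\Phi'(W\bar\xi^\nu))$. By the identity $W\bar\xi^\nu=(\I_1-\I_0)\xi^\nu+\I_0\vectorones[n]$ proved inside Theorem~\ref{thm:Wgen}, the entries of $W\bar\xi^\nu$ take only the two values $\I_0$ (off the support of $\xi^\nu$) and $\I_1$ (on its support), whence $D=\phi'(\I_0)\,(I-\diag(\xi^\nu))+\phi'(\I_1)\,\diag(\xi^\nu)$. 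Monotonicity of $\phi$ gives $0\preceq D\preceq\eta I$ with $\eta=\max\{\phi'(\I_0),\phi'(\I_1)\}$. Local asymptotic stability is then equivalent to showing that every eigenvalue of $DW$ has real part strictly below $1$.

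Next I would exploit that $W$ has rank at most $P+1$. Writing $W=B\Lambda B^\top$ with $B=[\,\xi^1-\beta\vectorones[n],\dots,\xi^P-\beta\vectorones[n],\ \vectorones[n]\,]$ and $\Lambda=\diag\!\bigl(\tfrac{\alpha}{p(1-r)n}I_P,\ \tfrac{\gamma}{n}\bigr)$, the nonzero eigenvalues of $DW$ coincide with those of the $(P+1)\times(P+1)$ matrix $M:=\Lambda B^\top D B$, while the remaining eigenvalues are $0$ and contribute the harmless value $-1$ to $J$. I would then use the permutation symmetry among the memories $\{\xi^\mu\}_{\mu\ne\nu}$: since $M$ commutes with permutations fixing $\nu$ and the all-ones coordinate, its spectrum splits into an antisymmetric part and a three-dimensional invariant part. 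On the antisymmetric block the relation $W(\xi^\mu-\xi^\lambda)=\alpha(\xi^\mu-\xi^\lambda)$ — a direct consequence of the equal-sparsity and equal-correlation identities \eqref{eq:memories} — together with $D\preceq\eta I$ caps the associated eigenvalue by $\eta\alpha$; the residual triple-overlap term $(\xi^\mu)^\top\diag(\xi^\nu)\xi^\lambda$ is controlled by the Bonferroni bound $(\xi^\mu)^\top\diag(\xi^\nu)\xi^\lambda\ge (\xi^\mu)^\top\xi^\nu+(\xi^\lambda)^\top\xi^\nu-(\xi^\nu)^\top\xi^\nu=pn(2r-1)$, which is exactly what is needed to keep the bound at $\eta\alpha$ without knowing the triple overlaps exactly.

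The invariant part is where the parameter $b$ enters. The key observation is that, because $(x_0,\I_0)$ and $(x_1,\I_1)$ lie on the line $\I=\alpha x+b$, one has $W\bar\xi^\nu=\alpha\bar\xi^\nu+b\vectorones[n]$. Substituting this together with the overlap identities into the action of $M$ on $\mathrm{span}\{e_\nu,\sum_{\mu\ne\nu}e_\mu,e_{P+1}\}$, I expect the $3\times3$ reduced matrix to peel off one further eigenvalue dominated by $\eta\alpha$ and to leave the $2\times2$ block $\bigl(\begin{smallmatrix}\alpha & Pb\\ c & \gamma-Pd\end{smallmatrix}\bigr)$, whose characteristic polynomial is precisely $\mathcal{P}(z)=(z-\gamma)(z-\alpha)+P(dz-\alpha d-bc)$ and whose roots are $z_1,z_2$. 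A consistency check is the degenerate case $r=p$, where $c=d=0$, so $z_1=\alpha$, $z_2=\gamma$, and \eqref{eq:stability-new} collapses to the condition $\eta\max\{\alpha,\gamma\}<1$ of Theorem~\ref{thm:stab}, validating the bookkeeping of constants.

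Finally I would assemble the estimates and split on the nature of the roots. When $z_1,z_2$ are real, all eigenvalues of $DW$ are real and are dominated by $\eta\max\{\alpha,z_1,z_2\}$, so \eqref{eq:stability-new} drives the whole spectrum below $1$ and $J$ is Hurwitz. When $z_1,z_2$ are complex the quantity $\max\{\alpha,z_1,z_2\}$ is meaningless, and I would instead bound the real parts of the two offending eigenvalues directly, showing that the weaker requirement $\eta\alpha<1$ in \eqref{eq:stability-newalt} already suffices. The hardest part is precisely this endgame: carrying out the $3\times3$-to-$2\times2$ reduction with the correct constants $b,c,d$ (tracking how $b$ propagates through the overlap identities and how the triple terms are absorbed), and then controlling the spectrum in the non-symmetric regime. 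Indeed, when $D\succ0$ the convenient similarity $DW\sim D^{1/2}WD^{1/2}$ forces real eigenvalues, but in the complex-root regime $D$ is singular and this symmetrization is unavailable, so the real-part bound must be obtained from the non-normal reduced block itself rather than from a symmetric surrogate.
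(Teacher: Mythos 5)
Your opening (Jacobian, the diagonal matrix $D$, the bound $0\preceq D\preceq\eta I_n$) and the rank-based reduction of the nonzero spectrum of $DW$ to a $(P+1)\times(P+1)$ matrix are sound, but the central step of your argument fails. The reduced matrix $M=\Lambda B^\top D B$ has entries of the form $(\xi^\mu-\beta\vectorones[n])^\top D\,(\xi^\lambda-\beta\vectorones[n])$, which depend on the triple overlaps $\sum_i \xi^\mu_i\xi^\nu_i\xi^\lambda_i$; these are \emph{not} determined by the equal-sparsity and equal-correlation assumptions, which fix only pairwise quantities. Consequently $M$ does not, in general, commute with permutations of the memories $\mu\neq\nu$, so the claimed splitting into an antisymmetric part plus a three-dimensional invariant part is unavailable; likewise, although $\mathrm{span}\{\xi^\mu-\xi^\lambda\}$ is $W$-invariant (your identity $W(\xi^\mu-\xi^\lambda)=\alpha(\xi^\mu-\xi^\lambda)$ is correct), it is not $DW$-invariant, so there is no ``associated eigenvalue'' to cap by $\eta\alpha$, and a Bonferroni lower bound on some entries of a non-normal matrix is not a mechanism that controls its spectrum. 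The missing idea — which is exactly the paper's Lemma~\ref{lemma:stab-gen} — is to decouple $D$ from $W$ \emph{before} any dimension reduction: since $W$ is symmetric and $D\preceq\eta I_n$, for $D\succ 0$ one has $\mathrm{eig}(DW)=\mathrm{eig}(D^{1/2}WD^{1/2})$ and $D^{1/2}WD^{1/2}\prec I_n \iff W\prec D^{-1}$, which is implied by $\eta W\prec I_n$ because $D^{-1}\succeq \eta^{-1}I_n$ (the singular-$D$ case is handled by a block-triangular reduction). This makes the memory-dependent matrix $D$ disappear entirely, and the theorem reduces to computing the spectrum of the symmetric matrix $W$ alone, which involves only pairwise overlaps: the paper verifies $WT=T\bar W$ with $T=[\vectorones[n],\bar\xi^1,\dots,\bar\xi^P]$ and a $(P+1)\times(P+1)$ matrix $\bar W$ whose characteristic polynomial is $(z-\alpha)^{P-1}\mathcal{P}(z)$, giving $\mathrm{eig}(W)\subseteq\{0,\alpha,z_1,z_2\}$.

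Your treatment of case (ii) rests on a misdiagnosis. The eigenvalues of $DW$ are always real in this setting — for $D\succ0$ by similarity to $D^{1/2}WD^{1/2}$, and for singular $D$ by the block reduction to $D_1W_{11}$ with $D_1\succ0$ and $W_{11}$ symmetric — so there are no complex eigenvalues of the Jacobian whose real parts need to be bounded, and the non-reality of $z_1,z_2$ has nothing to do with singularity of $D$. Complex roots of $\mathcal{P}$ occur precisely when $T$ is rank deficient (the paper notes this happens only when $Pp-(P-1)r=1$, i.e., when $\vectorones[n]$ lies in the span of the retrievable memories): then $\bar W$ is not similar to the restriction of $W$ to $\mathrm{Im}(T)$ and may carry spurious non-real eigenvalues. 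The paper's argument (for $Wy=\lambda y$, $\lambda\neq0$, set $x:=T^\top y\neq \vectorzeros$ and deduce $x^\top \bar W=\lambda x^\top$) shows that every nonzero eigenvalue of $W$ is an eigenvalue of $\bar W$; since $W$ is symmetric its spectrum is real, so when $z_1,z_2\notin\real$ the nonzero spectrum of $W$ is contained in $\{\alpha\}$ and $\eta\alpha<1$ suffices. The non-real roots are simply discarded, not estimated, so the ``hardest part'' you identify at the end is a non-problem once the decoupling lemma is in place.
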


Note that, if $r=p$, then $c=d=0$, so that $z_1=\gamma$, $z_2=\alpha$ and condition \eqref{eq:stability-new} reduces to \eqref{eq:stability}. Note moreover that case (ii) in the previous theorem is nongeneric in the sense that it applies in a very specific situation. \footnote{ Namely, the equality $Pp-(P-1)r=1$ has to hold.}.

Before proving the previous theorem we propose a lemma.

\begin{lemma}\label{lemma:stab-gen}
    Under the same assumptions and using the same notation of Theorem \ref{thm:stab-gen} we have that, if
    \begin{equation}\label{eq:stabdis}
    \eta W \prec I_n
    \end{equation}
        then the equilibria $\{\bar{\xi}^{\mu}\}_{\mu=1}^P$ of \eqref{eq:FR} are locally asymptotically stable.
 \end{lemma}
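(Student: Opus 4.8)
The plan is to apply Lyapunov's indirect method: I will show that hypothesis \eqref{eq:stabdis} forces every eigenvalue of the Jacobian $J(\bar\xi^\mu)$ in \eqref{eq:Jacobian} to have strictly negative real part. First I would evaluate the Jacobian at a retrievable memory. By the identity $W\bar\xi^\mu = (\I_1-\I_0)\xi^\mu + \I_0\vectorones[n]$ established in the proof of Theorem~\ref{thm:Wgen}, the vector $W\bar\xi^\mu$ has all entries in $\{\I_0,\I_1\}$, so the diagonal matrix $D := \diag(\Phi'(W\bar\xi^\mu))$ has diagonal entries in $\{\phi'(\I_0),\phi'(\I_1)\}$. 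Since $\phi$ is weakly increasing by Assumption~\ref{assum:activation-function}, we have $\phi'\ge 0$, hence $0 \preceq D \preceq \eta I_n$ with $\eta=\max\{\phi'(\I_0),\phi'(\I_1)\}$. The Jacobian then reads $J(\bar\xi^\mu) = -I_n + DW$, and its eigenvalues are exactly $-1+\lambda$ as $\lambda$ ranges over the spectrum of $DW$. It therefore suffices to prove that every eigenvalue of $DW$ is real and strictly below $1$.

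To see that $DW$ has real spectrum I would exploit the symmetry of $W$ together with $D\succeq 0$. When $D\succ 0$, the similarity $DW = D^{1/2}(D^{1/2}WD^{1/2})D^{-1/2}$ shows that $DW$ shares its eigenvalues with the symmetric matrix $D^{1/2}WD^{1/2}$, giving both real eigenvalues and the Rayleigh characterization $\lambda_{\max}(DW) = \max_{\|u\|=1}(D^{1/2}u)^\top W (D^{1/2}u)$. For the bound, set $w=D^{1/2}u$: hypothesis \eqref{eq:stabdis} reads $\eta W\prec I_n$, i.e.\ $w^\top W w < \eta^{-1}\|w\|^2$ for $w\ne 0$, while $D\preceq \eta I_n$ gives $\|w\|^2 = u^\top D u \le \eta$ for unit $u$. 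Combining the two yields $w^\top W w < \eta^{-1}\|w\|^2 \le 1$ for every unit $u$, so $\lambda_{\max}(DW)<1$ and the eigenvalues of $J(\bar\xi^\mu)$ are real and negative. (The degenerate case $\eta=0$ gives $D=0$, $J(\bar\xi^\mu)=-I_n$, which is immediate.)

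The main obstacle is that $D$ need not be invertible --- indeed the canonical Dayan--Abbott construction has $\phi'(\I_0)=0$, so the similarity $DW\sim D^{1/2}WD^{1/2}$ breaks down. To handle this I would reorder indices so that $D = \blkdiag(D_1, 0)$ with $D_1\succ 0$, and partition $W$ conformally into blocks $W_{11},W_{12},W_{21},W_{22}$. Then $DW$ is block upper-triangular with diagonal blocks $D_1W_{11}$ and $0$, so its spectrum is $\{0\}\cup\mathrm{eig}(D_1W_{11})$, all real. Since $I_n-\eta W\succ 0$ forces the principal submatrix $I-\eta W_{11}\succ 0$, and $D_1\preceq \eta I$ with $D_1\succ 0$ on that block, the invertible-case argument applies verbatim to give $\lambda_{\max}(D_1W_{11})<1$, while the extra zero eigenvalues are trivially below $1$. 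Hence every eigenvalue of $DW$ is real and strictly less than $1$, every eigenvalue of $J(\bar\xi^\mu)$ is real and strictly negative, and local asymptotic stability of each $\bar\xi^\mu$ follows from Lyapunov's indirect method.
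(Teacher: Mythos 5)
Your proof is correct and takes essentially the same route as the paper's: both apply Lyapunov's indirect method to $J(\bar\xi^\mu)=-I_n+DW$, dispose of the degenerate case $\eta=0$, handle singular $D$ by reordering into $D=\blkdiag(D_1,0)$ and exploiting the resulting block-triangular structure, and settle the invertible block via the symmetric matrix $D_1^{1/2}W_{11}D_1^{1/2}$. Your Rayleigh-quotient chain $w^\top W w<\eta^{-1}\|w\|^2\le 1$ is merely a rephrasing of the paper's Loewner-order chain $W_{11}\prec\eta^{-1}I_{n_1}\preceq D_1^{-1}$.
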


\begin{proof}
It is well known that, according to Lyapunov's indirect method, a sufficient condition that ensures local asymptotic stability of an
equilibrium point $\bar{\xi}^{\mu}$ is that the Jacobian matrix \eqref{eq:Jacobian} evaluated at $\bar{\xi}^{\mu}$ has eigenvalues with negative real part. Observe that the Jacobian matrix can be written as
\begin{align}
    J(\bar{\xi}^{\mu}) &= -I_{n}+\diag(\Phi'(W\bar{\xi}^{\mu}))W \notag\\
    &= -I_{n}+\underbrace{\diag(\Phi'((\I_{1}-\I_{0})\xi^{\mu}+\I_{0}\vectorones[n]))}_{=:D}W \label{eq:J-stab}
\end{align}
where we used the identity $W\bar{\xi}^{\mu}=(\I_{1}-\I_{0})\xi^{\mu}+\I_{0}\vectorones[n]$ established in the proof of Theorem \ref{thm:Wgen}. 
This means that $D$ is a diagonal matrix with diagonal entries that are $\phi'(\I_{0})$ or $\phi'(\I_{1})$. From Assumption 1 we know that $\eta\ge 0$. 
We distinguish three cases:\\ 
a) If $\eta=0$, then \eqref{eq:stabdis} holds true. Moreover, $D=0$ and hence $J(\bar{\xi}^{\mu})=-I_{n}$ which implies that $\bar \xi^{\mu}$ is locally asymptotically stable, proving that the thesis holds true in this case. \\
b) Assume now that $\eta>0$ but that one of the two derivatives $\phi'(\I_{0})$ or $\phi'(\I_{1})$ is zero. This means that some of the diagonal entries of $D$ are zero. Assume with no loss of generality that the first $n_1$ diagonal entries of $D$ are positive. Hence we can write 
$$D= \begin{bmatrix}
    D_1 &  0\\
    0 & 0
\end{bmatrix}. $$
where $D_1$ is diagonal and with positive diagonal entries. Then
\begin{align*}
    J(\bar{\xi}^{\mu}) &= -I_{n}+DW =
    \begin{bmatrix}
    -I_{n_{1}} &  0\\
    0 & -I_{n_{2}}
    \end{bmatrix}
    +\begin{bmatrix}    D_1 &  0\\
    0 & 0
    \end{bmatrix}
    \begin{bmatrix}
    W_{11} &  W_{12} \\
    W_{21}  & W_{22} 
    \end{bmatrix}\\
&=
\begin{bmatrix}
    -I_{n_{1}}+D_1W_{11} &  D_1W_{12} \\
    0  & -I_{n_2}
\end{bmatrix}.
\end{align*}
where $n_{1},n_{2}\in\mathbb{N}$ and $n_{1}+n_{2}=n$.
Hence the eigenvalues of
$J(\bar{\xi}^{\mu})$ have negative real part if and only if
$-I_{n_1}+D_1W_{11}$ has this property and this happens if and only if
$D_1^{1/2}W_{11}D_1^{1/2}\prec I_{n_{1}}$ or, equivalently, if and only if
$W_{11}\prec D_1^{-1}$. In this way we have shown that $W_{11}\prec D_1^{-1}$ implies the local asymptotic stability of the
equilibrium point $\bar{\xi}^{\mu}$. Now observe that condition \eqref{eq:stabdis} implies $W_{11}\prec \eta^{-1}I_{n_1}$. Since $ D_1\preceq \eta I_{n_1}$, then $ D_1^{-1}\succeq \eta^{-1}I_{n_1}$ and hence we can argue that $W_{11}\prec D_1^{-1}$ that is what we need for proving the local asymptotic stability.\\
c) Assume now that both the derivatives $\phi'(\I_{0})$ and $\phi'(\I_{1})$ are both nonzero. This means that the matrix $D$ is invertible. The proof of the local asymptotic stability can be obtained following the same arguments used in the previous point. 
\end{proof}

\begin{proof}[Proof of Theorem \ref{thm:stab-gen}] Let $\lambda_\text{max}(\cdot)$ denote the maximum eigenvalue of a symmetric matrix. According the previous lemma, $\lambda_\text{max}(\eta W)=\eta\lambda_\text{max}(W)<1$ implies local asymptotic stability.
Hence the
theorem is proved if we prove that the eigenvalues of $W$ may only take one
of the values $0,\alpha,z_1,z_2$.
To this aim consider the matrix $T:=[\vectorones[n],\bar \xi^1,\ldots,\bar\xi^P]\in\real^{n\times (P+1)}$. 
Since it can be verified that $W\vectorones[n]=c\sum_{\mu=1}^P\bar\xi^\mu+(\gamma-Pd)\vectorones[n]$,
$W\bar\xi^\nu=\alpha\bar\xi^\nu+b\vectorones[n]$  for all $\nu=1,\ldots,P$, and $Ww=0$ for all $w\in \mathrm{Im}(T)^\perp$,
then we can argue that the subspaces $\mathrm{Im}(T)$ and $\mathrm{Im}(T)^\perp$ are $W$-invariant.
Moreover, it holds
$$WT=T\underbrace{\begin{bmatrix}
    \gamma-Pd & b\vectorones[P]^T \\
    c\vectorones[P] & \alpha I_{P} 
  \end{bmatrix}}_{=:~\bar W\in\real^{(P+1)\times (P+1)}}$$
Let now $y\neq{\vectorzeros}$ such that $Wy=\lambda y$ for $\lambda\neq 0$ and observe that $y\in \text{Im}(T)$, since $\text{Im}(W)\subseteq\text{Im}(T)$. Define $x:=T^{\top}y$ and observe that, since $y\in \text{Im}(T)$, then, $x\neq \vectorzeros$. 
We have that
\begin{align}
    \lambda x^{\top} &= \lambda y^{\top} T\nonumber\\
    &= y^{\top}WT = y^{\top}T\bar W = x^{\top}\bar W
\end{align}
We proved in this way that $\text{eig}(W)/\{0\}\subseteq\text{eig}(\bar W)/\{0\}$, where $\text{eig}(\cdot)$ means the set of eigenvalues of a square matrix. The characteristic polynomial of $\bar W$ is $(z-\alpha)^{P-1}((z-\gamma +Pd)(z-\alpha)-Pbc)$ that has roots $\alpha,z_1,z_2$. The thesis follows from the fact that $\alpha\ge 0$.
\end{proof}

\subsection*{The global stability result}

The following result is instrumental for the proof of Theorem \ref{thm:conv}.

\begin{lemma}\label{prop:invariance}
Let Assumption \ref{assum:activation-function} hold and assume that $\phi(\cdot)$ takes values in a bounded interval $\mathcal{I}$. 
Then there exists a large enough $M$ such that the set $[\phi(-M),\phi(M)]^n$ is forward invariant for the dynamics \eqref{eq:FR}.
\end{lemma}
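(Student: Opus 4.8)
The plan is to use a Nagumo-type subtangentiality argument for the forward invariance of the box $[\phi(-M),\phi(M)]^n$, and the decisive observation is that this box lies inside the fixed bounded interval $\mathcal{I}^n$ for every $M$, so that the pre-activation $Wx$ remains uniformly bounded independently of $M$. Once this is in place, enlarging $M$ suffices to ``trap'' the pre-activations inside $[-M,M]$, after which monotonicity of $\phi$ yields the correct sign of the vector field on each face of the box.

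First I would establish the uniform bound. Since the range of $\phi$ lies in the bounded set $\mathcal{I}$, put $C:=\sup_{y\in\mathcal{I}}|y|<\infty$; then every $x\in[\phi(-M),\phi(M)]^n\subseteq\mathcal{I}^n$ satisfies $|x_j|\le C$ for all $j$. Writing $\|W\|_\infty:=\max_i\sum_j|W_{ij}|$, it follows that $|(Wx)_i|\le C\|W\|_\infty=:R$ for every index $i$, where crucially $R$ does not depend on $M$.

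Next I would fix any $M\ge R$ and check the boundary conditions on the vector field $f_i(x)=-x_i+\phi((Wx)_i)$ of \eqref{eq:FR}. On the lower face $x_i=\phi(-M)$, the bound $(Wx)_i\ge -R\ge -M$ together with monotonicity of $\phi$ gives $\phi((Wx)_i)\ge\phi(-M)$, hence $f_i(x)\ge 0$. Symmetrically, on the upper face $x_i=\phi(M)$, the bound $(Wx)_i\le R\le M$ gives $\phi((Wx)_i)\le\phi(M)$, hence $f_i(x)\le 0$. These are exactly the subtangentiality conditions for the box: the vector field points inward (or is tangent) wherever a coordinate touches its lower or upper bound, so by Nagumo's theorem the box is forward invariant. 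Equivalently, a first-exit-time argument suffices: a trajectory starting in the box cannot cross any face, since at a putative first exit instant some $f_i$ would have to carry the corresponding coordinate across its face in violation of the two inequalities just derived.

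The main obstacle is conceptual rather than computational: one must recognize that the bound $R$ on $Wx$ does not grow with $M$. This decoupling is precisely what the boundedness of the range $\mathcal{I}$ provides; without it (for an unbounded $\phi$) the pre-activations could scale with $M$ and the face conditions might never hold, whereas here a single choice $M\ge R$ settles the invariance for all coordinates at once, giving the desired $M$.
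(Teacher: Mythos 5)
Your proposal is correct and follows essentially the same route as the paper's proof: both exploit the fact that the box $[\phi(-M),\phi(M)]^n$ sits inside the bounded set $\mathcal{I}^n$, so that $\|Wx\|_\infty$ is bounded by a constant (your $R$, the paper's $\bar M=\|W\|_\infty x_M$) independent of $M$, and then verify the sign of $\dot{x}_i$ on each face and invoke Nagumo's theorem. The only cosmetic difference is your closing remark about a first-exit-time argument, which the paper does not use.
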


\begin{proof}
Since $\mathcal{I}$ is bounded, then there exists $x_M>0$ such that $\mathcal{I}\subseteq[-x_M,x_M]^n$. Let $\bar M:=\|W\|_\infty x_M$, where $\|\cdot\|_\infty$ is the infinity norm of a vector or of a matrix. We want to prove that for any $M\ge \bar M$ we have that $[\phi(-M),\phi(M)]^n$ is forward invariant for the dynamics \eqref{eq:FR}. First observe that if $x\in [\phi(-M),\phi(M)]^n$, then $x\in \mathcal{I}^n\subseteq[-x_M,x_M]^n$ and hence $\|Wx\|_\infty\le \|W\|_\infty \|x\|_\infty=\bar M$ which implies that $\phi([Wx])\in [\phi(-\bar M),\phi(\bar M)]^n\subseteq [\phi(-M),\phi(M)]^n$.
Let now $x\in [\phi(-M),\phi(M)]^n$ such that $x_i=\phi(-M)$. Then
$$\dot{x}_i=-x_i+\phi([Wx]_{i})\ge-\phi(-M)+\phi(-M)=0$$
On the other hand, if we take instead $x\in [\phi(-M),\phi(M)]^n$ such that $x_i=\phi(M)$, then
$$\dot{x}_i=-x_i+\phi([Wx]_{i})\le-\phi(M)+\phi(M)=0$$
The above conditions imply that for any $x$ belonging to the boundary of $[\phi(-M),\phi(M)]^n$ the direction of the vector derivative in the dynamics \eqref{eq:FR} is either tangent or points inside the set $[\phi(-M),\phi(M)]^n$. The forward invariance of $[\phi(-M),\phi(M)]^n$ then follows from Nagumo's Theorem \citep[Theorem 4.7]{BM:08}.
\end{proof}

We next present the proof of Theorem \ref{thm:conv}.

\begin{proof}[Proof of Theorem \ref{thm:conv}]
By Lemma \ref{prop:invariance}, for any $x(0)\in\mathcal{I}^n$, there exists $M>0$ such that the set $\mathcal{S}:=[\phi(-M),\phi(M)]^n$ is forward invariant for \eqref{eq:FR} and contains $x(0)$. This implies that $\subscr{E}{FR}(x)$ in \eqref{eq:energy-FR} and $\nabla \subscr{E}{FR}(x)=(-Wx + \Phi^{-1}(x))^{\top}$ are well-defined when evaluated along the trajectories of \eqref{eq:FR}.

The derivative of $\subscr{E}{FR}$ along each such trajectory is
\begin{align*}
    \subscr{\dot{E}}{FR}(x) &= \nabla \subscr{E}{FR}(x) \dot{x}\\
    & = (-Wx + \Phi^{-1}(x))^\top (-x + \Phi(Wx))\\
    & = \sum_{i=1}^{n} (-[Wx]_{i}+\phi^{-1}(x_{i}))(-x_{i}+\phi([Wx]_{i})),
\end{align*}
where we used the symmetry of $W$. 
Since $\phi(\cdot)$ is  weakly increasing and $\phi^{-1}(\cdot)$ is a right inverse of $\phi(\cdot)$, if $\phi^{-1}(x_{i})\ge [Wx]_{i}$ then $\phi(\phi^{-1}(x_{i}))= x_{i}\geq \phi([Wx]_{i})$, and if $\phi^{-1}(x_{i})\le [Wx]_{i}$ then  $x_{i}\le[Wx]_{i}$. This implies that $(-[Wx]_{i}+\phi^{-1}(x_{i}))(-x_{i}+\phi([Wx]_{i}))\le 0$ for all $i=1,\dots,n$ and for all $x\in\mathcal{S}$. As a consequence, $\subscr{\dot{E}}{FR}(x)\le 0$ for all $x\in\mathcal{S}$. Moreover, $\subscr{\dot{E}}{FR}(x)=0$ if and only if $\phi^{-1}(x_{i})=[Wx]_{i}$ or $\phi([Wx]_{i})=x_i$ for all $i=1,\dots,n$. Since $\phi^{-1}(x_{i})=[Wx]_{i}$ implies $\phi([Wx]_{i})=x_i$, we conclude that $\subscr{\dot{E}}{FR}(x)=0$ if and only if $\phi([Wx]_{i})=x_i$ for all $i=1,\dots,n$, that is, if and only if $x$ is an equilibrium point of \eqref{eq:FR}. Since $\subscr{E}{FR}(x)$ is Lipschitz in $\mathcal{S}$, the thesis follows from the LaSalle invariance principle  
\citep[Corollary 1]{L:68}. 
\end{proof}

\subsection*{The construction of the deterministic memories}
We now provide a simple yet effective way to build a set of memories that satisfy the equal sparsity \eqref{eq:memories-b} and equal correlation \eqref{eq:memories-c} constraints. Fix $n\in\mathbb{N}$ the size of the network and $P\in\mathbb{N}$ the number of memories to be stored. Then we set the equal sparsity parameter as $p=1/(P-1)$ and define the matrix $M\in\real^{n\times P}$ having the prototypical memories as column as
$$
    M=
    \begin{bmatrix}
        \vectorones[p^{2}n] 
        \vectorones[P]^{\top} \\
        I_{P} \\
        \vdots \\
        I_{P}
    \end{bmatrix}
$$
with the identity matrices being repeated $p(1-p)n$ times. Then the columns of $M$ provide a set of $P$ memories satisfying the equal sparsity \eqref{eq:memories-b} and equal correlation \eqref{eq:memories-c} constraints.

\bibliographystyle{apa}

\end{document}